\tikzset{mode/.style={font=\scriptsize}}
\tikzset{gadget/.style={->,>=stealth,initial text=,minimum size=7pt,auto,on grid,scale=1,inner sep=1pt,node distance=1cm}}
\tikzset{gadgeto/.style={-,>=stealth,initial text=,minimum size=7pt,auto,on grid,scale=1,inner sep=1pt,node distance=1cm}}
\tikzset{every state/.style={minimum size=15pt,inner sep=1pt,fill=black!10,draw=black!70,thick}}
\newcommand{\eg}{\textit{e.g.}\@\xspace}
\newcommand{\ie}{\textit{i.e.}\@\xspace}
\newcommand{\N}{\mathbb{N}}
\newcommand{\Z}{\mathbb{Z}}
\newcommand{\set}[1]{\{#1\}}
\newcommand{\step}{\to}
\newcommand{\steps}{\xrightarrow{*}}
\newcommand{\zstep}{\underset{\Z}{\to}}
\newcommand{\zsteps}{\underset{\Z}{\overset{*}\rightarrow}}
\newcommand{\weakstep}{\Rightarrow}
\newcommand{\weaksteps}{\xRightarrow{*}}
\newcommand{\Succs}{\mathsf{Succ}}
\newcommand{\xsteps}[1]{\xrightarrow{#1}}
\newcommand{\xzsteps}[1]{\underset{\Z}{\overset{#1}\rightarrow}}
\newcommand{\xweaksteps}[1]{\xRightarrow{#1}}
\newcommand {\weak} {\text{monus}}
\newcommand{\copyv}[1]{\mathsf{copy}(#1)}
\newcommand{\copyi}[1]{\mathsf{copy}_i(#1)}
\newcommand{\extend}[1]{\mathsf{extend}(#1)}
\newcommand{\oomit}[1]{}
\newcommand{\rev}[1]{{#1}^{\mathsf{rev}}}
\newcommand{\mi}{\text{min}_\Z}
\newcommand{\vv}{\mathbf{v}}
\newcommand{\ww}{\mathbf{w}}
\newcommand{\zz}{\mathbf{z}}
\newcommand{\mm}{\mathbf{m}}
\newcommand{\uu}{\mathbf{u}}
\newcommand{\ee}{\mathbf{e}}
\newcommand{\cV}{\mathcal{V}}
\newcommand{\cA}{\mathcal{A}}
\newcommand{\bzero}{\bm{0}}
\title{Monus semantics in vector addition systems with states}
\author{Pascal Baumann}{Max Planck Institute for Software Systems (MPI-SWS), Germany}{pbaumann@mpi-sws.org}{https://orcid.org/0000-0002-9371-0807}{}
\author{Khushraj Madnani}{Max Planck Institute for Software Systems (MPI-SWS), Germany}{kmadnani@mpi-sws.org}{https://orcid.org/0000-0003-0629-3847}{}
\author{Filip Mazowiecki}{University of Warsaw, Poland}{f.mazowiecki@mimuw.edu.pl}{https://orcid.org/0000-0002-4535-6508}{Supported by the ERC grant INFSYS, agreement no. 950398.}
\author{Georg Zetzsche}{Max Planck Institute for Software Systems (MPI-SWS), Germany}{georg@mpi-sws.org}{https://orcid.org/0000-0002-6421-4388}{}
\authorrunning{P. Baumann, K. Madnani, F. Mazowiecki, and G. Zetzsche}
\keywords{Vector addition systems, Overapproximation, Reachability, Coverability}
\begin{document}
	
	\maketitle
	
	\begin{abstract}
		Vector addition systems with states (VASS) are a popular model for concurrent systems. However, many decision problems have prohibitively high complexity. Therefore, it is sometimes useful to consider overapproximating semantics in which these problems can be decided more efficiently.
		
		We study an overapproximation, called monus semantics, that slightly relaxes the semantics of decrements: A key property of a vector addition systems is that in order to decrement a counter, this counter must have a positive value. In
		contrast, our semantics allows decrements of zero-valued counters: If such a
		transition is executed, the counter just remains zero.
		
		It turns out that if only a subset of transitions is used with monus semantics
		(and the others with classical semantics), then reachability is undecidable.
		However, we show that if monus semantics is used throughout, reachability
		remains decidable. In particular, we show that reachability for VASS with monus semantics is as hard as that of classical VASS (i.e. Ackermann-hard), while the zero-reachability and coverability are easier (i.e. $\EXPSPACE$-complete and $\NP$-complete, respectively).
		We provide a comprehensive account of the complexity of the
		general reachability problem, reachability of zero configurations, and
		coverability under monus semantics. We study these problems in general VASS,
		two-dimensional VASS, and one-dimensional VASS, with unary and binary counter
		updates.
	\end{abstract}

	\section{Introduction}
	Vector addition systems with states (VASS) are an established model used in formal verification with a wide range of applications, \eg in concurrent systems~\cite{GermanS92}, business processes~\cite{AL97} and others (see the survey~\cite{Schmitz16}). They are finite automata with transitions labeled by vectors over integers in some fixed dimension $d$. A configuration of a VASS consists of a pair $(p, \vv)$, denoted $p(\vv)$, where $p$ is a state and $\vv$ is a vector in $\N^d$. As a result of applying a transition labeled by some $\zz \in \Z^d$, the vector in the resulting configuration is $\vv + \zz$. Thus in particular $\vv + \zz \ge \bzero$ must hold for the transition to be applicable. The latter requirement is often called the VASS semantics. To avoid ambiguity we will refer to it as the \emph{classical VASS semantics}.
	
	The VASS model is also studied with other semantics.
	One of the most natural variants of VASS semantics is the \emph{integer semantics} (or simply \emph{$\Z$-semantics}), where configurations are of the form $p(\vv)$, where $\vv \in \Z^d$~\cite{DBLP:conf/rp/HaaseH14}. There, a transition can always be applied, \ie the resulting configuration is $\vv + \zz$ and we do not require $\vv + \zz \ge \bzero$. In this paper we consider VASS with the \emph{monus semantics}, whose behavior partly resembles both classical and integer semantics. There, a transition can always be applied (as in $\Z$-semantics), however, if as a result the vector in the new configuration would have negative entries, then these are replaced with $0$. Thus, vectors in configurations are over the naturals (as in classical semantics). The name monus semantics comes from the monus binary operator, which is a variant of the minus operator.\footnote{One can also think that monus semantics is integer semantics, where after every step we apply the ReLU function.} Note that every instance of a VASS can be considered with all three semantics. See \cref{fig:intro_semantics} for an example.
	
	\begin{figure}
		\centering
		\begin{tikzpicture}
			\node[state] (p) {$p$};
			
			\node[above right = -0.7cm and 1.5cm  of p] (q) {
				\begin{tabular}{l l}
					\text{classical} & $p(2,0) \step p(1,2) \step p(0,4) \not \step$ \\
					\text{integer} & $p(2,0) \zstep p(1,2) \zstep p(0,4) \zstep p(-1,6) \zsteps p(-n,4+2n)$ \\
					\text{monus} & $p(2,0) \weakstep p(1,2) \weakstep p(0,4) \weakstep p(0,6) \weaksteps p(0,4+2n)$
				\end{tabular}
			};
			
			\path
			(p) edge[loop above] node {$(-1,2)$} (p)
			;
		\end{tikzpicture}
		\caption{A VASS in dimension $2$ with one state $p$ and one transition $t$. It has only one transition labeled with $(-1,2)$. We consider possible runs assuming that the initial configuration is $p(2,0)$. We use different notation for steps in each semantics: $\step$, $\zstep$, $\weakstep$. For the classical semantics ($\step$) after reaching the configuration $p(0,4)$ the transition can no longer be applied. For the integer semantics ($\zsteps$) the transition can be applied even in $p(0,4)$, reaching all configurations of the form $p(-n,4+2n)$. Similarly for the monus semantics ($\weaksteps$), but there the configurations reachable from $p(0,4)$ are of the form $p(0,4+2n)$.}\label{fig:intro_semantics}
	\end{figure}
	
	We study classical decision problems for VASS: reachability and coverability. The input for these problems is a VASS $\cV$, an initial configuration $p(\vv)$, and a final configuration $q(\ww)$. The \emph{reachability} problem asks whether there is a run from $p(\vv)$ to $q(\ww)$. A variant of this problem, called \emph{zero reachability}, requires additionally that in the input the final vector is fixed to $\ww = \bzero$. The \emph{coverability} problem asks whether there is a run from $p(\vv)$ to $q(\ww')$, where $\ww' \ge \ww$. Note that all three problems can be considered with respect to any of the three VASS semantics. As an example consider the VASS in \cref{fig:intro_semantics}. Then for all three semantics $p(1,2)$ is both reachable and coverable from $p(2,0)$; and $p(0,2)$ is not reachable from $p(2,0)$ (but it is coverable as $(1,2) \ge (0,2)$).
	
	\subparagraph*{Contribution I: Arbitrary dimension.}
	Our first contribution is settling the complexities of reachability and coverability for VASS with the monus semantics (see \cref{table-results}).
	We prove that reachability is Ackermann-complete by showing that it is inter-reducible with classical VASS reachability, which is known to be~Ackermann-complete~\cite{DBLP:conf/lics/LerouxS19,DBLP:conf/focs/CzerwinskiO21,DBLP:conf/focs/Leroux21}.
	This comes as a surprise, since in monus semantics, every transition can always be applied, just like in $\Z$-semantics,
	where reachability is merely $\NP$-complete~\cite{DBLP:conf/rp/HaaseH14}. Thus, the monus operation encodes enough information in the resulting configuration that reachability remains extremely hard.
	
	The Ackermann-hardness relies crucially on the fact that the final configuration is non-zero: We also show that the zero reachability problem is $\EXPSPACE$-complete in monus semantics. This uses inter-reducibility with classical VASS coverability, which is $\EXPSPACE$-complete due to seminal results of Lipton and Rackoff~\cite{lipton1976reachability,Rackoff78}. The fact that zero-reachability is significantly easier than general reachability is in contrast to classical semantics, where zero reachability is interreducible with the reachability problem (intuitively, one can modify the input VASS by adding an extra edge that decrements by $\ww$).
	
	In another unexpected result, the complexity of coverability drops even more: We prove that it is $\NP$-complete in monus semantics. We complete these results by showing that mixing classical and monus semantics (\ie each transition is designated to either work in classical or monus semantics) makes reachability undecidable.
	
	\subparagraph*{Contribution II: Fixed dimension.}
	Understanding the complexity of reachability problems in VASS of fixed dimension has received a lot of attention in recent years and is now well understood. This motivates our second contribution: An almost complete complexity analysis of reachability, zero reachability and coverability for VASS with the monus semantics in dimensions $1$ and $2$. Here, the complexity depends on whether the counter updates are encoded in unary or binary (see \cref{table-results}). 

	We restrict our attention to dimensions $1$ and $2$ as most research in fixed dimension for the classical semantics.
	For the classical semantics not much is known about reachability in dimension $d\ge 3$. Essentially, the only known results consist of an upper bound of $\mathbf{F}_7$ that follows from the Ackermann upper bound in the general case~\cite{DBLP:conf/lics/LerouxS19}, and a $\PSPACE$-lower bound that holds already for $d=2$~\cite{DBLP:journals/jacm/BlondinEFGHLMT21}. An intuition as to why the jump from $2$ to $3$ is so difficult is provided already by Hopcroft and Pansiot~\cite{hopcroft1979reachability} who prove that the reachability set is always semilinear in dimension $2$, and show an example that this is not the case in dimension~$3$. In contrast, coverability is well understood, and already Rackoff's construction~\cite{Rackoff78} shows that for fixed dimension $d \ge 2$ coverability is in $\NL$ and in $\PSPACE$, for unary and binary encoding, respectively (with matching lower bounds~\cite{DBLP:journals/jacm/BlondinEFGHLMT21}).
	
	\subparagraph*{Key technical ideas} The core insights of our paper are
	characterizations of the reachability and coverability relations in monus
	semantics, in terms of reachability and coverability in classical and
	$\Z$-semantics
	(\cref{charact-reachability,charact-coverability,charact-zero-reachability}).
	These allow us to apply a range of techniques to reduce reachability problems
	for one semantics into problems for other semantics, and thereby transfer
	existing complexity results. There are three cases where we were unable to
	ascertain the exact complexity: (i)~reachability in $2$-VASS with unary counter
	updates, (ii)~zero reachability in $1$-VASS with binary updates, and
	(iii)~coverability in $1$-VASS with binary counter updates.  Concerning (i),
	this is because for $2$-VASS with unary updates, it is known that classical
	reachability is $\NL$-complete~\cite{DBLP:journals/jacm/BlondinEFGHLMT21}, but
	we would need to decide existence of a run that visits intermediate
	configurations of a certain shape.  In the case of $2$-VASS with binary
	updates, the methods from~\cite{DBLP:journals/jacm/BlondinEFGHLMT21} (with a
	slight extension from~\cite{BaumannMeyerZetzsche2022a}) allow this. The other cases,
	(ii) and (iii), are quite similar to each other.
	In particular, problem (ii) is logspace-interreducible
	with classical coverability in $1$-VASS with binary updates, for which only
	an $\NL$ lower bound and an $\NC^2$ upper bound are known~\cite{DBLP:conf/concur/AlmagorCPS020}.

	\subparagraph*{Monus semantics as an overapproximation.}
	Recall the example in \cref{fig:intro_semantics}. Notice that every configuration reachable in the classical semantics is also reachable in the integer and monus semantics. It is not hard to see that this is true for every VASS model. Such semantics are called \emph{overapproximations} of the classical VASS semantics. Overapproximations are a standard technique used in implementations of complex problems, in particular for the VASS model (see the survey~\cite{Blo20}). They allow to prune the search space of reachable configurations, based on the observation that if a configuration is not reachable by an overapproximation then it cannot be reachable in the classical semantics. This is the core idea behind efficient implementations both of the coverability problem~\cite{ELMMN14,BFHH16} and the reachability problem~\cite{DixonL20,BlondinHO21}.
	
	The two most popular overapproximations, integer semantics~\cite{DBLP:conf/rp/HaaseH14} and continuous\linebreak semantics~\cite{FH15}, behave similarly for both reachability and coverability problems, namely both problems are $\NP$-complete. Note that all of the implementations mentioned above rely on such algorithms in $\NP$ as they can be efficiently implemented via SMT solvers. Interestingly, the monus semantics is an efficient overapproximation only for the coverability problem. (As far as we know this is the first study of a VASS overapproximation with this property.) Therefore, it seems to be a promising approach to try to speed up backward search algorithms using monus semantics (in the same vein as~\cite{BFHH16}). Whether this leads to improvements in practice remains to be seen in future work.
	
	\newclass{\TWOEXPTIME}{2EXPTIME}
	
	\subparagraph*{Related work.}
	We discuss related work for VASS in classical semantics.
	A lot of research is dedicated to reachability for the flat VASS model, \ie a model that does not allow for nested cycles in runs. In dimension $2$ decision problems for VASS reduce to flat VASS, which is crucial to obtain the exact complexities~\cite{DBLP:journals/jacm/BlondinEFGHLMT21}. It is known that in dimensions $d \ge 3$ such a reduction is not possible, but this raised natural questions of the complexity for flat VASS in higher dimensions~\cite{Czerwinski0LLM20,CzerwinskiO22}. Another research direction is treating the counters in VASS models asymmetrically. For example, it is known that allowing for zero tests in VASS makes reachability and coverability undecidable (they essentially become Minsky machines). However, it was shown that if only one of the $2$ counters is allowed to be zero tested then both reachability and coverability remain $\PSPACE$-complete~\cite{LerouxSutre2020a}. A different asymmetric question is when one counter is encoded in binary and the other is encoded in unary. Then recently it was shown that coverability is in $\NP$~\cite{MSW23} but it is unknown whether there is a matching lower bound. Finally, there are two important extensions of the VASS model: branching VASS (where runs are trees, not paths), and pushdown VASS (with one pushdown stack). For branching VASS, coverability is $\TWOEXPTIME$-complete~\cite{DemriJLL13}. The complexity of reachability is well understood in dimension $1$~\cite{GollerHLT16,FigueiraLLMS17} but in dimension $2$ or higher it is unknown whether it is decidable. For pushdown VASS only coverability in dimension $1$ is known to be decidable~\cite{LerouxST15}, otherwise decidability of both reachability and coverability remain open problems. Recently some progress was made on restricted pushdown VASS models~\cite{EnglertHLLLS21,GanardiMPSZ22}. The monus semantics is a natural overapproximation that can be studied in all of these variants. Finally, let us mention that VASS with monus semantics fit into the very general framework of G-nets~\cite{DBLP:conf/icalp/DufourdFS98}, but does not seem to fall into any of the decidable subclasses studied in~\cite{DBLP:conf/icalp/DufourdFS98}. However, if we equip VASS with with the usual well-quasi ordering on configurations, it is easy to see that even with monus semantics, they constitute well-structured transition systems (WSTS)~\cite{DBLP:journals/tcs/FinkelS01,DBLP:conf/lics/AbdullaCJT96}, which makes available various algorithmic techniques developed for WSTS.
	
	\subparagraph*{Organization.} In \cref{sec:vass} we formally define the VASS model and the classical, integer and monus semantics. In \cref{sec:arbitrary_dim} we prove the results in arbitrary dimension. Then in \cref{sec:two_dim} and \cref{sec:one_dim} we prove the results in dimension $2$ and $1$, respectively.

	\section{Vector addition systems with monus semantics: Main results}\label{sec:vass}
	Given a vector $\vv \in \Z^d$ we write $\vv[i]$ for the value in the $i$-th coordinate, where $i \in \set{1,\ldots, d}$.  We also refer to $i$ as the $i$-th counter and write that it contains $\vv[i]$ tokens. 
	Given two vectors $\vv$ and $\vv'$ we write $\vv \ge \vv'$ if $\vv[i] \ge \vv'[i]$ for all $i = 1,\ldots, d$.
	By $\mathbf{0}^d$ we denote the zero vector in dimension $d$. We also simply write $\bzero$ if $d$ is clear from context.
	
	\subparagraph*{Vector addition systems with states.}
	A vector addition system with states (VASS) is a triple $\cV=(d, Q,\Delta)$, where $d \in \N$, $Q$ is a finite set of states and $\Delta \subseteq Q \times \Z^d \times Q$ is a finite set of transitions.
	Throughout the paper we fix a VASS  $\cV = (d, Q,\Delta)$.
	
	We start with the formal definitions in the \emph{classical semantics}. A configuration of a VASS is a pair $p(\vv) \in Q \times \N^d$, denoted $p(\vv)$. Any transition $t \in \Delta$ induces a successor (partial) function $\Succs_t: Q \times \N^d \to Q \times \N^d$ such 
	that $\Succs_t (q(\vv)) = q'(\vv')$   iff $t = (q,\zz,q')$ and $\vv' = \vv + \zz$. This successor function can be lifted up to $\Delta$ to get a  step relation $\step_\cV$, such that any pair of configuration $C \step_\cV C'$ iff there exists $t \in \Delta$ with $\Succs_t(C)=C'$. A \emph{run} is a sequence of configurations
	\begin{align*}
		q_0(\vv_0), q_1(\vv_1), q_2 (\vv_2), \ldots, q_k (\vv_k)
	\end{align*}
	such that for every $0 < j \le k$, $q_{j-1}(\vv_{j-1}) \step_V q_{j}(\vv_{j})$. If there exists such a run we say that $q_k(\vv_k)$ is reachable from $q_0(\vv_0)$ and denote it $C_0 \mathrel{{\steps}{}_{\cV}} C_k$. We call $\mathrel{{\steps}{}_{\cV}}$ the reachability relation in the classical VASS semantics.
	
	In this paper we consider two additional semantics. The first is called the \emph{integer semantics} (or \emph{$\Z$-semantics}). A configuration in this semantics is a pair $p(\vv) \in Q \times \Z^d$ (hence, values of vector coordinates can drop below zero). The definitions of  successor function,  step relation and  run are analogous as for the classical semantics. By $\mathrel{{\zstep}{}_{\cV}}$  and $\mathrel{{\zsteps}{}_{\cV}}$, we denote the step  and reachability relations in the $\Z$-semantics, respectively.
	
	The second is called \emph{monus semantics}. The configurations are the same as in the classical semantics. The difference is in the successor function. Every transition $t \in \Delta$ induces a successor function $\Succs_t: Q \times \N^d \to Q \times \N^d$ as follows: 
	$\Succs_t(q(\vv)) = q'(\vv')$  iff $t = (q,\zz,q')$ and for all $j \in \{1,2, \ldots d\}$, $\vv'[j] = \max (\vv[j] + \zz[j], 0)$. We write in short $\vv' = \max(\vv+\zz,\bzero)$. Step relation and runs are defined analogously as in the case of classical semantics. By $\mathrel{{\weakstep}{}_{\cV}}$ and $\mathrel{{\weaksteps}{}_{\cV}}$, we denote the step and reachability relations in the monus semantics, respectively.
	
	We drop the subscript $\cV$ from the above relations when the VASS is clear from context. We write that a run is a \emph{classical run}, a \emph{$\Z$ run} or a \emph{monus run} to emphasize the considered semantics. An example highlighting the differences between the three semantics is in \cref{fig:intro_semantics}.

	\subparagraph*{Decision problems.}
	We study the following decision problems for VASS. 
	
	\smallskip
	
	The classical \emph{reachability problem}:
	\begin{description}
		\item[Given] A VASS $\cV=(d,Q,\Delta)$ and two configurations $p(\vv)$ and $q(\ww)$.
		\item[Question] Does $p(\vv) \weaksteps q(\ww)$ hold?
	\end{description}
	
	The classical \emph{zero reachability problem}:
	\begin{description}
		\item[Given] A VASS $\cV=(d,Q,\Delta)$, a configuration $p(\vv)$ and a state $q$.
		\item[Question] Does $(p, \vv) \weaksteps q(\mathbf{0}^d)$ hold?
	\end{description}
	
	The classical \emph{coverability problem}:
	\begin{description}
		\item[Given] A VASS $\cV=(d,Q,\Delta)$ and two configurations $p(\vv)$ and $q(\ww)$.
		\item[Question] Does $p(\vv) \weaksteps q(\ww')$ hold for some $\ww' \ge \ww$?
	\end{description}
	
	Similarly, the above problems in $\Z$ and classical semantics are defined by replacing $\weaksteps$ with $\zsteps$ and $\steps$, respectively.

	\subparagraph{Main results}
	The main complexity results of this work are summarized in
	\cref{table-results}.  In \cref{table-classical}, we recall complexity
	results for VASS with classical semantics for comparison.
	We do not split the cases of unary and binary encoding for arbitrary dimensions,
	since there all lower bounds work for unary, whereas all upper bounds work for binary.
	
	Concerning the \emph{reachability problem}, we note that in all cases where we
	obtain the exact complexity, it is the same as for the classical VASS
	semantics. For the other decision problems, there are stark differences: First,
	while in the classical semantics, \emph{zero reachability} is easily
	inter-reducible with general reachability, in the monus semantics, its
	complexity drops in two cases: In $1$-VASS with binary counter updates, monus
	zero reachability is in $\NC^2$ (thus polynomial time), compared to $\NP$ in the
	classical setting. Moreover, in arbitrary dimension, monus zero reachability is
	$\EXPSPACE$-complete, compared to Ackermann in the classical
	semantics.
	For the \emph{coverability problem}, the monus semantics also lowers the
	complexity in two cases: For binary encoded $2$-VASS ($\NP$ in monus semantics,
	$\PSPACE$ in classical) and in the general case ($\NP$ in monus semantics,
	$\EXPSPACE$ in classical semantics).

	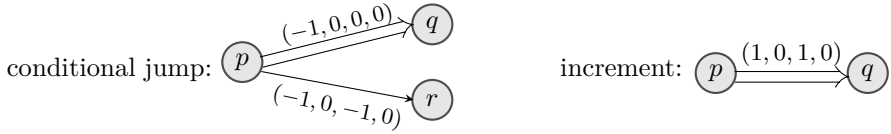
\begin{figure} 
		\text{conditional jump:}~\begin{tikzpicture}[gadgeto, node distance=2cm, baseline={([yshift=-.8ex]current bounding box.center)},
			Double/.style={
				to path={
					($(\tikztostart)!2pt!90:($(\tikztotarget)!5pt!(\tikztostart)$)$) -- ($($(\tikztotarget)!3.5pt!(\tikztostart)$)!2pt!270:(\tikztostart)$)
					($(\tikztostart)!2pt!270:($(\tikztotarget)!4pt!(\tikztostart)$)$) -- ($($(\tikztotarget)!3.5pt!(\tikztostart)$)!2pt!90:(\tikztostart)$)
					($($(\tikztotarget)!5pt!(\tikztostart)$)!4pt!90:(\tikztostart)$) 
					.. controls
					($($(\tikztotarget)!3pt!(\tikztostart)$)!0.5pt!90:(\tikztostart)$) and
					($(\tikztotarget)!1pt!(\tikztostart)$)
					.. (\tikztotarget) 
					.. controls
					($(\tikztotarget)!0.5pt!(\tikztostart)$) and
					($($(\tikztotarget)!3pt!(\tikztostart)$)!1pt!270:(\tikztostart)$) 
					..
					($($(\tikztotarget)!5pt!(\tikztostart)$)!4pt!270:(\tikztostart)$)
				}
			}
			]
			\node[state] (p) at (0,0) {$p$};
			\node[state] (q) at (2.5, 0.5) {$q$};
			\node[state] (r) at (2.5, -0.5) {$r$};
			\node[rotate =12] at (1.25, 0.5){\footnotesize{$(-1, 0, 0, 0)$}};
			\draw(0.25,0) to[Double](2.25,0.5);
			\draw[->](0.25,-0.125)--(2.25,-0.5);
			\node[rotate =-12] at (1.25, -0.6){\footnotesize{$(-1, 0, -1, 0)$}};
		\end{tikzpicture}~ ~~~~~~~~~~\text{increment}:
		\begin{tikzpicture}[gadgeto, node distance=1cm, baseline={([yshift=-.8ex]current bounding box.center)}, Double/.style={
				to path={
					($(\tikztostart)!2pt!90:($(\tikztotarget)!5pt!(\tikztostart)$)$) -- ($($(\tikztotarget)!3.5pt!(\tikztostart)$)!2pt!270:(\tikztostart)$)
					($(\tikztostart)!2pt!270:($(\tikztotarget)!4pt!(\tikztostart)$)$) -- ($($(\tikztotarget)!3.5pt!(\tikztostart)$)!2pt!90:(\tikztostart)$)
					($($(\tikztotarget)!5pt!(\tikztostart)$)!4pt!90:(\tikztostart)$) 
					.. controls
					($($(\tikztotarget)!3pt!(\tikztostart)$)!0.5pt!90:(\tikztostart)$) and
					($(\tikztotarget)!1pt!(\tikztostart)$)
					.. (\tikztotarget) 
					.. controls
					($(\tikztotarget)!0.5pt!(\tikztostart)$) and
					($($(\tikztotarget)!3pt!(\tikztostart)$)!1pt!270:(\tikztostart)$) 
					..
					($($(\tikztotarget)!5pt!(\tikztostart)$)!4pt!270:(\tikztostart)$)
				}
			}
			]
			\node[state] (p) at (0,0) {$p$};
			\node[state] (q) at (2, 0) {$q$};
			\node at (1, 0.25) {\small{$(1,0,1,0)$}};
			\draw (0.25,-0.05) to[Double] (1.75,-0.05);
			
		\end{tikzpicture}

		\caption{Two gadgets for realizing a zero-testable counter.}\label{undecidability}
	\end{figure}

	\subparagraph{Undecidability} To stress the subtle effects of
	monus semantics, we mention that it leads to undecidability if combined
	with classical semantics: If one can specify the applied semantics
	(classical vs.\ monus) for each transition, then (zero) reachability
	becomes undecidable. 
	
	We sketch the proof using \cref{undecidability}. It shows two gadgets, where ``$\to$''
	transitions use classical semantics and ``$\Rightarrow$'' transitions
	use monus semantics. The two gadgets realize a counter with zero test:
	The left gadget is a conditional jump (``if zero, then go to $q$,
	otherwise decrement and go to $r$''), whereas the right gadget is just
	an increment. In intended runs (i.e.\ where the left gadget always takes
	the intended transition), the counter value is stored both in
	components $1$ and $3$.  (To realize a full two-counter machine, the
	same gadgets on components $2$ and $4$ realize the other testable
	counter.) Thus, initially, all components are zero. Note that if the
	left gadget always takes the transitions as intended, then the first
	and third counter will remain equal. If the gadget takes the upper
	transition when the counter is not actually zero, then the first
	counter becomes smaller than the third, and will then always stay
	smaller. Hence, to reach $(0,0,0,0)$, the left gadget must
	always behave as intended.

However, coverability remains decidable if we can specify the semantics of each
transition. Indeed, suppose we order the configurations of a VASS by the usual
well-quasi ordering (i.e.\ the control states have to agree, and the counter
values are ordered component-wise). Then it is easy to see that this results in
a well-structured transition system
(WSTS)~\cite{DBLP:journals/tcs/FinkelS01,DBLP:conf/lics/AbdullaCJT96}. This
also implies, e.g.\ that termination is decidable in this general setting.

	\newcommand{\completeAbbr}{complete}
	\begin{table}[t]
		\begin{center}
			\begin{tabular}{lccc} \toprule
				\makecell{Dimension \\ \& encoding}& Monus Reachability & Monus zero reachability & Monus coverability \\\midrule
				1-dim, unary   & $\NL$-\completeAbbr        & $\NL$-\completeAbbr             & $\NL$-\completeAbbr \\
				1-dim, binary   & $\NP$-\completeAbbr        & in $\NC^2$               & in $\NC^2$ \\
				2-dim, unary   & in $\PSPACE$        &               $\NL$-\completeAbbr & $\NL$-\completeAbbr\\
				2-dim, binary   & $\PSPACE$-\completeAbbr        & $\PSPACE$-\completeAbbr               & $\NP$-\completeAbbr \\
				arbitrary   & Ack-\completeAbbr        & $\EXPSPACE$-\completeAbbr               & $\NP$-\completeAbbr \\\bottomrule
			\end{tabular}
		\end{center}
		\caption{Complexity results shown in this work.}\label{table-results}
	\end{table}
	
	\renewcommand{\completeAbbr}{complete}
	\newcommand{\completeAbbrShort}{compl.}
	\begin{table}[t]
		\begin{center}
			\begin{tabular}{lccc} \toprule
				\makecell{Dimension \\ \& encoding} & Reachability & Zero reachability & Coverability \\\midrule
				1-dim, unary   & $\NL$-\completeAbbr~\cite{ValiantP75}        & $\NL$-\completeAbbr~\cite{ValiantP75}             & $\NL$-\completeAbbr~\cite{ValiantP75} \\
				1-dim, binary   & $\NP$-\completeAbbr~\cite{DBLP:conf/concur/HaaseKOW09}        & $\NP$-\completeAbbr~\cite{DBLP:conf/concur/HaaseKOW09}               & in $\NC^2$~\cite{DBLP:conf/concur/AlmagorCPS020} \\
				2-dim, unary   & $\NL$-\completeAbbr~\cite{DBLP:journals/jacm/BlondinEFGHLMT21}        &     $\NL$-\completeAbbr~\cite{DBLP:journals/jacm/BlondinEFGHLMT21} & $\NL$-\completeAbbr~\cite{rosier1986multiparameter}\\
				2-dim, binary   & $\PSPACE$-\completeAbbr~\cite{DBLP:journals/jacm/BlondinEFGHLMT21}        & $\PSPACE$-\completeAbbr~\cite{DBLP:journals/jacm/BlondinEFGHLMT21}               & $\PSPACE$-\completeAbbr~\cite{DBLP:journals/jacm/BlondinEFGHLMT21,rosier1986multiparameter,FearnleyJurdzinski13} \\
				arbitrary   & Ack-\completeAbbrShort~\cite{DBLP:conf/lics/LerouxS19,DBLP:conf/focs/Leroux21,DBLP:conf/focs/CzerwinskiO21}        & Ack-\completeAbbrShort~\cite{DBLP:conf/lics/LerouxS19,DBLP:conf/focs/Leroux21,DBLP:conf/focs/CzerwinskiO21}               & $\EXPSPACE$-\completeAbbrShort~\cite{lipton1976reachability,Rackoff78} \\\bottomrule
			\end{tabular}
			
		\end{center}
		\caption{Known complexities for classical VASS semantics, for comparison.}\label{table-classical}
	\end{table}
	
	\section{Arbitrary dimension}\label{sec:arbitrary_dim}
	
	In this section, we prove the complexity results concerning VASS with arbitrary dimension. This will include the characterizations of monus reachability, monus zero reachability, and monus coverability in terms of classical and $\Z$-semantics. We begin with some terminology.
	
	\subparagraph*{Paths.}
	A sequence of transitions $(p_1,\zz_1,q_1),\ldots,(p_{k},\zz_{k},q_{k})$ is valid iff $q_i = p_{i+1}$ for every $1 \le i < k-1$. Furthermore, we say that it is valid from a given configuration $(p, \vv)$ if $p = p_0$. We call a valid sequence of transitions a \emph{path}. 

	Given two paths $\rho_1$ and $\rho_2$ if the last state of $\rho_1$ is equal to the first state of $\rho_2$ then by $\rho = \rho_1 \rho_2$ we denote the path defined as the sequence $\rho_1$ followed by the sequence $\rho_2$.
	Similarly, we use this notation with more paths, \eg $\rho = \rho_1 \rho_2 \ldots \rho_k$ means that the path $\rho$ is composed from $k$ paths: $\rho_1, \ldots \rho_k$. 

	Fix a path $\rho = (p_0,\zz_0,p_1),\ldots,(p_{k-1},\zz_{k-1},p_k)$.
	We say that $\zz = \sum_{i=0}^{k-1}\zz_i$ is the effect of the path $\rho$.
	Notice that while for classical and $\Z$-semantics the effect of a path can be computed by subtracting the vectors in the last and first configurations, this is not necessarily true for monus semantics. In \cref{fig:intro_semantics} consider the path $\rho = t,t,t$. The effect is $(-3,6)$. In the $\Z$-semantics $(2,0) \zsteps (-1,6)$ and the difference $(-1,6) - (2,0)$ is precisely the effect of $\rho$. In the monus semantics it is not the case as $(2,0) \weaksteps (0,6)$. This is because a run in monus semantics can lose some decrements, unlike in classical and $\Z$-semantics.

	\begin{remark}\label{rem:vass}
		Observe that every classical and $\Z$ run defines a unique path from the initial configuration. For monus semantics uniqueness is not guaranteed as it is possible that a run induces more than one path. Indeed, suppose $p(2,0) \weakstep q(1,0)$. This could be realised by any transition of the form $(p,(-1,z),q)$, where $z \le 0$. 
		Conversely, a path induces a unique run for $\Z$ and $\weak$ semantics. 
		Formally, consider a path $(p_0,\zz_1,p_1),\ldots,(p_{k-1},\zz_k,p_{k})$ from a configuration $s(\vv)$. Then, in the $\Z$ and $\weak$ semantics there exists a unique corresponding run. In the classical semantics a path might be blocked if a counter drops below zero (see \eg \cref{fig:intro_semantics}).
		We write $p_0(\vv_0) \xsteps{\rho} p_k(\vv_k)$, $p_0(\vv_0) \xzsteps{\rho} p_k(\vv_k)$ and $p_0(\vv_0) \xweaksteps{\rho} p_k(\vv_k)$ if $p_0(\vv_0), \ldots, p_k(\vv_k)$ is a run in classical, integer and monus semantics, respectively. Recall that for classical and $\Z$-semantics $\vv_{i+1} - \vv_i = \zz_i$, and for monus semantics $\vv_{i+1} = \max(\vv_i+\zz_i, \bzero)$.
	\end{remark}
	
	Consider a run $R = p_0(\vv_0), \ldots, p_k(\vv_k)$ (in any semantics). We say that the counter $j \in \{1,\cdots, d\}$ hits $0$  
	iff $\vv_i[j] = 0$ for some $1\le i \le k$. Similarly, we say that the counter $j \in \{1,\cdots, d\}$ goes negative in $R$ iff $\vv_i[j] < 0$ for some $0\le i \le k$ (this can happen only in the $\Z$-semantics).
	
	Let $\rho = (p_0, \zz_0, p_1) \ldots (p_{k-1}, \zz_{k-1}, p_k)$  be a path such that $R$ is the unique run corresponding to $\rho$ from the initial configuration $p_0(\vv_0)$. We say that $(\rho, R)$ or $p_0(\vv_0) \xweaksteps{\rho} p_k(\vv_k)$ is lossy for the counter $j \in \{1,\cdots, d\}$ iff $\vv_i[j] - \vv_{i-1}[j] \ne \zz_{i-1}[j]$ for some $1\le i \le k$ (a lossy run can happen only in the monus semantics).
	\begin{remark} \label{rem:vass-approxim}
		Integer and monus semantics are overapproximations of the classical semantics. That is, $s(\vv) \xsteps{\rho} t(\ww)$ implies $s(\vv) \xzsteps{\rho} t(\ww)$ and $s(\vv) \xweaksteps{\rho} t(\ww)$. The converse is not always the case (see \cref{fig:intro_semantics}). Moreover, $s(\vv) \xweaksteps{\rho} t(\ww)$ implies $s(\vv) \xsteps{\rho} t(\ww)$ if $s(\vv) \xweaksteps{\rho} t(\ww)$ is not lossy. Notice that if in $s(\vv) \xweaksteps{\rho} t(\ww)$, none of the counters $j \in \{1,\ldots,d\}$ hits $0$ then it is not a lossy run. Similarly,  $s(\vv) \xzsteps{\rho} t(\ww)$ implies $s(\vv) \xsteps{\rho} t(\ww)$ if, in the former run, none of the counters $j \in \{1,\ldots,d\}$ goes negative.
	\end{remark}

	\subparagraph{Characterizing Monus Reachability.}
	Our first goal is to characterize the reachability problem for the monus semantics in terms of the classical semantics.
	We start with some propositions that relate monus runs to $\Z$ runs and classical runs.
	Let $\rho$ be a path and $s_0(\vv_0)$ a configuration. Let $s_0(\vv_0) \ldots s_k(\vv_k)$ be the unique $\Z$ run defined by  $\rho$ and $s_0(\vv_0)$. We define the vector $\mm = \mi(\rho, s_0, \vv_0)$ by $\mm[i] = \min(\min_{j = 0}^{k} \vv_j[i], 0)$. Intuitively, it is the vector of minimal values in the $\Z$ run, but note that $\mm \le \bzero$.
	
	For the next two propositions we fix a configuration $s_0(\vv_0) \in Q \times \N^d$, a path $\rho = (s_0, \zz_0, s_1) \ldots (s_{k-1}, \zz_{k-1}, s_k)$, and $\mm = \mi(\rho, s_0, \vv_0)$. 
	
	\begin{proposition}\label{prop:weak-integer}
		Consider the unique runs induced by $\rho$ from $s_0(\vv_0)$ in $\Z$-semantics
		\[
		s_0(\vv_0), \ldots, s_{k-1}(\vv_{k-1}), s_k(\vv_k),
		\]
		and in monus semantics
		\[
		s_0(\vv'_0), \ldots, s_{k-1}(\vv'_{k-1}), s_k(\vv'_k).
		\]
		where $\vv'_0 = \vv_0$. Then $\vv'_k = \vv_k - \mm$.
	\end{proposition}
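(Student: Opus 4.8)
The plan is to prove the identity coordinatewise and by induction on the length of the prefix of $\rho$. Since both the $\Z$-successor rule $\vv_{j+1} = \vv_j + \zz_j$ and the monus-successor rule $\vv'_{j+1} = \max(\vv'_j + \zz_j, \bzero)$ act on each coordinate independently, it suffices to fix a coordinate $i \in \{1,\ldots,d\}$ and establish $\vv'_k[i] = \vv_k[i] - \mm[i]$. This reduces the statement to a one-dimensional fact relating a free $\Z$-walk $\vv_0[i], \ldots, \vv_k[i]$ to its ReLU-reflected counterpart $\vv'_0[i], \ldots, \vv'_k[i]$, both started from the same (nonnegative) value $\vv_0[i]$, since $\vv'_0 = \vv_0$.

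For the fixed coordinate $i$, I would track the running prefix minimum $\mm_j[i] := \min(0, \min_{0 \le \ell \le j} \vv_\ell[i])$, so that $\mm[i] = \mm_k[i]$ and $\mm_{j+1}[i] = \min(\mm_j[i], \vv_{j+1}[i])$. The statement I prove by induction on $j$ is the strengthened invariant $\vv'_j[i] = \vv_j[i] - \mm_j[i]$ for all $0 \le j \le k$. The base case $j = 0$ is immediate: because the initial configuration satisfies $\vv_0 \in \N^d$ we have $\vv_0[i] \ge 0$, hence $\mm_0[i] = 0$ and $\vv'_0[i] = \vv_0[i] = \vv_0[i] - \mm_0[i]$.

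For the inductive step, assuming $\vv'_j[i] = \vv_j[i] - \mm_j[i]$, I substitute into the monus recurrence and use $\vv_j[i] + \zz_j[i] = \vv_{j+1}[i]$ to get
\[
\vv'_{j+1}[i] = \max(\vv'_j[i] + \zz_j[i], 0) = \max(\vv_{j+1}[i] - \mm_j[i], 0).
\]
The argument then splits on whether step $j+1$ attains a new prefix minimum. If $\vv_{j+1}[i] \ge \mm_j[i]$, then $\mm_{j+1}[i] = \mm_j[i]$ and, since $-\mm_j[i] \ge 0$, the maximum evaluates to $\vv_{j+1}[i] - \mm_j[i] = \vv_{j+1}[i] - \mm_{j+1}[i]$. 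Otherwise $\vv_{j+1}[i] < \mm_j[i] \le 0$ forces $\mm_{j+1}[i] = \vv_{j+1}[i]$, and the maximum evaluates to $0 = \vv_{j+1}[i] - \mm_{j+1}[i]$. In both cases the invariant is preserved, and setting $j = k$ yields $\vv'_k[i] = \vv_k[i] - \mm[i]$ for every coordinate $i$, which is the proposition.

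I expect no serious obstacle here, as the identity is essentially the Lindley/Skorokhod reflection relation between a random walk and its reflected version. The only point requiring care is choosing the right induction hypothesis: one must track the running minimum $\mm_j[i]$ rather than just the final value $\mm[i]$, since it is the recurrence $\mm_{j+1}[i] = \min(\mm_j[i], \vv_{j+1}[i])$ that drives the inductive step. It is also worth noting explicitly that the base case $\mm_0[i] = 0$ relies on the hypothesis $\vv_0 \in \N^d$; this is exactly where nonnegativity of the initial configuration enters.
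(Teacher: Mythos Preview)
Your proof is correct and follows essentially the same approach as the paper: both argue by induction on the prefix length, strengthen the hypothesis to track the running prefix minimum $\mm_j$, and perform the identical two-case split on whether step $j+1$ attains a new minimum. The only cosmetic difference is that you fix a coordinate first and then induct, whereas the paper inducts on the path length and handles each coordinate inside the step; the computations are the same.
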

	\begin{proof}[Proof (sketch)]
		We analyse the behavior of every counter $j$. Recall that the $\Z$ run and the monus run have the same value in the counter $j$ until the first time the value of $j$ becomes negative in the $\Z$ run. We denote this as $\vv_i[j] = -u$. Note that $\vv'_i[j] = 0$. Hence, $\vv_i[j] - \vv'_i[j] = -u$. It is not hard to see that every time the value of the counter $j$ reaches a new minimum in the $\Z$-semantics, the difference $\vv'_i[j] - \vv_i[j]$ will be equal to it. We prove this formally by induction on $k$. Refer to \cref{app:weak-integer} for the formal proof.  
	\end{proof}
	
	\begin{remark}
		\label{rem:zvass}
		Let $\zz \in \Z^d$.
		A sequence of configurations $s_0(\vv_0) \ldots s_k(\vv_k)$ is a run in $\Z$-semantics corresponding to a path $\rho$ iff $s_0 (\vv_0 - \zz) \ldots s_k(\vv_k - \zz)$ is a run in $\Z$-semantics on the same path $\rho$.
	\end{remark}
	
	\begin{proposition}
		\label{prop:weak-class}
		Consider the following unique run corresponding to the path $\rho$ from $s_0(\vv_0)$ in the monus semantics
		\[
		s_0(\vv_0), \ldots, s_{k-1}(\vv_{k-1}), s_k(\vv_k).
		\] Then the following run, induced by $\rho$, exists in the classical semantics
		\[
		s_0(\vv'_0), \ldots, s_{k-1}(\vv'_{k-1}), s_k(\vv'_k).
		\]
		where $\vv'_0 = \vv_0 - \mm$ and $\vv'_k = \vv_k$.
	\end{proposition}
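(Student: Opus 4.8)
The plan is to reduce everything to the translation invariance of $\Z$-runs (\cref{rem:zvass}) together with the very definition of $\mm$ from the $\Z$-run. Intuitively, the vector $-\mm \ge \bzero$ is exactly large enough to lift the entire $\Z$-run along $\rho$ into the non-negative orthant; once a $\Z$-run never goes negative it is automatically a classical run (\cref{rem:vass-approxim}), and its endpoint is then forced to coincide with the monus endpoint by \cref{prop:weak-integer}.

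Concretely, I would first write $s_0(\uu_0), \ldots, s_k(\uu_k)$ for the unique $\Z$-run induced by $\rho$ from $s_0(\vv_0)$, so that $\uu_0 = \vv_0$ and, by definition, $\mm[i] = \min(\min_{j=0}^{k} \uu_j[i], 0) \le 0$ for every coordinate $i$. Since $-\mm \ge \bzero$, the configuration $s_0(\vv_0 - \mm)$ is a genuine (non-negative) configuration, and by \cref{rem:zvass} the $\Z$-run induced by $\rho$ from it is exactly $s_0(\uu_0 - \mm), \ldots, s_k(\uu_k - \mm)$.

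The one computation to carry out is that this shifted $\Z$-run never goes negative: fixing a coordinate $i$, if $\min_j \uu_j[i] \ge 0$ then $\mm[i] = 0$ and $\uu_j[i] - \mm[i] = \uu_j[i] \ge 0$, whereas if $\min_j \uu_j[i] < 0$ then $\mm[i] = \min_l \uu_l[i]$ and $\uu_j[i] - \mm[i] = \uu_j[i] - \min_l \uu_l[i] \ge 0$; in either case $\uu_j - \mm \ge \bzero$ for all $j$. Hence, by \cref{rem:vass-approxim}, this $\Z$-run is also a classical run along $\rho$. Setting $\vv'_j := \uu_j - \mm$ we thus obtain the classical run $s_0(\vv'_0), \ldots, s_k(\vv'_k)$ with $\vv'_0 = \uu_0 - \mm = \vv_0 - \mm$, as required.

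It remains to identify $\vv'_k$ with the monus endpoint $\vv_k$. Here I would invoke \cref{prop:weak-integer}: applied to the $\Z$-run $s_0(\uu_0), \ldots, s_k(\uu_k)$ and the monus run from the same start (whose endpoint is precisely the $\vv_k$ appearing in the statement), it yields $\vv_k = \uu_k - \mm = \vv'_k$, closing the argument. The only real subtlety is bookkeeping rather than mathematics: the symbol $\vv_j$ denotes the $\Z$-run in \cref{prop:weak-integer} but the monus run in the present statement, so one must take care to feed the correct run into \cref{prop:weak-integer}. Beyond that, all the content is packed into the choice of $-\mm$ as the minimal non-negative shift that renders the $\Z$-run non-negative throughout, which is exactly how $\mm$ was designed.
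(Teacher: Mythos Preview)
Your proposal is correct and follows essentially the same route as the paper: take the $\Z$-run from $s_0(\vv_0)$, shift it by $-\mm$ via \cref{rem:zvass}, observe that the shifted run stays non-negative by the very definition of $\mm$, conclude it is a classical run, and use \cref{prop:weak-integer} to identify the endpoint with the monus endpoint. Your case analysis for non-negativity is in fact spelled out a bit more carefully than in the paper's version, and your caution about the clash of notation between the two propositions is well placed.
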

	\begin{proof}
		This essentially follows from the definition of $\mm$ and \cref{rem:zvass}. One just needs to observe that the $\Z$ run with configurations shifted by the vector $-\mm$ does not go below zero, hence it is a classical run.
		See \cref{app:weak-class} for the formal proof.
	\end{proof}
	
	We now characterize monus reachability in terms of classical reachability.
	\begin{proposition}\label{charact-reachability}
		Let $\cV=(d,Q,\Delta)$ be a VASS, let $s(\vv)$ and $t(\ww)$ be
		configurations of $\cV$, and let $\rho$ be a path of $\cV$. Then, $s(\vv) \xweaksteps{\rho} t(\ww)$ if and only
		if there is a subset $Z\subseteq\{1,\ldots, d\}$ and a vector $\vv' \ge \vv$ such that
		\begin{enumerate}
			\item\label{charact-reachability:steps} $s(\vv') \xsteps{\rho} t(\ww)$,
			\item\label{charact-reachability:add-minimum} For every $z\in Z$, the coordinate $z$ hits $0$ in $s(\vv') \xsteps{\rho} t(\ww)$,
			\item\label{charact-reachability:initial} For every $j\in\{1,\ldots,d\}\setminus Z$, we have $\vv'[j]=\vv[j]$.
		\end{enumerate}
	\end{proposition}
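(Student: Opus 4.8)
The plan is to prove both directions around the vector $\mm = \mi(\rho,s,\vv)$ of $\Z$-minima, with the witness being $\vv' = \vv - \mm$; note that $\mm \le \bzero$ by definition, so $\vv' \ge \vv$ holds automatically. Two facts drive the whole argument: the monus run from $s(\vv)$ and any classical or $\Z$ run are uniquely determined by $\rho$ (\cref{rem:vass}), and the classical run from $\vv - \mm$ is precisely the $\Z$ run from $\vv$ shifted by $-\mm$ (this is exactly how \cref{prop:weak-class} is obtained, via \cref{rem:zvass}).

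For the forward direction, assume $s(\vv) \xweaksteps{\rho} t(\ww)$. \Cref{prop:weak-class} hands me a classical run $s(\vv') \xsteps{\rho} t(\ww)$ with $\vv' = \vv - \mm$, which is \cref{charact-reachability:steps}. I then set $Z = \{\, j : \mm[j] < 0 \,\}$. Since $\mm[j] = 0$ for $j \notin Z$, we get $\vv'[j] = \vv[j]$, giving \cref{charact-reachability:initial}. For \cref{charact-reachability:add-minimum}, the classical value on a coordinate $z$ at index $i$ equals $\vv_i^\Z[z] - \mm[z]$, where $\vv_i^\Z$ are the $\Z$-run values from $\vv$; for $z \in Z$ the minimum $\mm[z] = \min_i \vv_i^\Z[z]$ is attained at some index $i^\ast$, and there the classical value is $\vv_{i^\ast}^\Z[z] - \mm[z] = 0$, so coordinate $z$ hits $0$.

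For the backward direction, assume $Z$ and $\vv' \ge \vv$ satisfy the three conditions. The cleanest route is to recompute $\mm = \mi(\rho,s,\vv)$ coordinate-wise and then apply \cref{prop:weak-integer}. A classical run is also a $\Z$ run, so the given $s(\vv') \xsteps{\rho} t(\ww)$ is a $\Z$ run from $\vv'$ ending at $\ww$ that stays $\ge \bzero$; by \cref{rem:zvass} the $\Z$ run from $\vv$ is this run shifted by $\vv - \vv'$. On a coordinate $j \notin Z$ the shift is $0$ (by \cref{charact-reachability:initial}) and the run stays $\ge 0$, so $\mm[j] = 0$; on a coordinate $z \in Z$ the classical run hits $0$ (\cref{charact-reachability:add-minimum}) and stays $\ge 0$, so its $\Z$-minimum there is exactly $0$, whence the shifted minimum gives $\mm[z] = \vv[z] - \vv'[z] \le 0$. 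Thus $\mm = \vv - \vv'$, and since the $\Z$ run from $\vv$ ends at $\vv_k^\Z = \ww + (\vv - \vv')$, \cref{prop:weak-integer} yields that the monus run ends at $\vv_k^\Z - \mm = \ww$, as required.

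The step I expect to be the main obstacle is this coordinate-wise determination of $\mm$ in the backward direction, specifically arguing that condition \cref{charact-reachability:add-minimum} forces the $\Z$-minimum on each $z \in Z$ to be exactly $0$ (not merely $\le 0$): this is what makes the extra initial tokens $\vv'[z] - \vv[z]$ get ``lost'' by the monus run, and thereby reconciles the lower start at $\vv$ with the same endpoint $\ww$. An equivalent, more elementary alternative would bypass \cref{prop:weak-integer} by directly comparing the two runs coordinate-wise: the monus value stays $\le$ the classical value throughout (monotonicity of $x \mapsto \max(x,0)$), is squeezed to $0$ at the index where the classical run hits $0$, and thereafter coincides with the classical run, so both end at $\ww[j]$.
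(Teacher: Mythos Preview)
Your proof is correct. The forward direction matches the paper's: both take $\vv' = \vv - \mm$ via \cref{prop:weak-class} and read off conditions~(\ref{charact-reachability:steps})--(\ref{charact-reachability:initial}) from the definition of $\mm$.

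The backward direction is where you and the paper diverge. The paper argues directly by the squeeze you sketch at the end: writing out the classical run from $\vv'$ and the monus run from $\vv$, it observes that $\vv' \ge \vv$ propagates to every step, so on a coordinate $z \in Z$ the monus value is forced to $0$ at the index where the classical value hits $0$, and the two runs coincide from that index on (and throughout on coordinates outside $Z$). Your primary route is instead to pin down $\mm$ exactly as $\vv - \vv'$ by shifting the given classical run via \cref{rem:zvass} and then to read off the monus endpoint from \cref{prop:weak-integer}. Both are clean; your approach reuses \cref{prop:weak-integer} symmetrically to the forward direction and avoids the index-by-index synchronisation, while the paper's avoids recomputing $\mm$ and is closer to a self-contained combinatorial argument. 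The one point worth making explicit in your write-up is that for $z \in Z$ the index $i^\ast$ at which the $\Z$-minimum is attained satisfies $i^\ast \ge 1$ (because $\vv_0^\Z[z] = \vv[z] \ge 0 > \mm[z]$), so ``hits $0$'' in the sense of the paper indeed holds.
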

	\begin{proof}
		$(\implies)$
		Let $\mm = \mi(\rho, s, \vv)$.
		This direction is implied by \cref{prop:weak-class} along with the following argument. Every counter $j \in \{1,\ldots, d\}$ hits $0$ in $s(\vv) \xweaksteps{\rho} t(\ww)$ if and only if it hits $0$ in $s(\vv - \mm) \xsteps{\rho} t(\ww)$. Moreover, if $j$ does not hit $0$ in  $s(\vv) \xweaksteps{\rho} t(\ww)$ then $\mm[j] = 0$.  
	
		$(\impliedby)$
		Let $\vv' \ge \vv$ be a vector as in the statement and let $s(\vv') \xsteps{\rho} t(\ww)$. We define $Z \subseteq \{1 \ldots d\}$ such that $i \in Z$ if it hits $0$.
		Moreover, let $s(\vv) \xweaksteps{\rho} t(\ww'')$.
		It suffices to show that $\ww = \ww''$.
		We write $s(\vv') = p_0(\vv'_0) \ldots p_k(\vv'_k) = t(\ww)$ and $s(\vv) = p_0(\vv_0) \ldots p_k(\vv_k) = t(\ww'')$ for the corresponding runs in the classical and monus semantics, respectively. 
		Note that $\vv'\ge \vv$ implies $\vv'_i \ge \vv_i$ for all $0 \le i \le k$. By definition of $\vv'$ it suffices to consider counters $j$ that hit zero, \ie $\vv'_i[j] = 0$ for some $0 \le i \le k$. Since $\vv'_i \ge \vv_i$ we get $\vv_i'[j] = 0 = \vv_i[j]$. Hence, from $i$ onward both runs agree on the value in counter $j$. Thus $\ww = \ww''$.
    
    See \cref{app:charact-reachability} for an extended version of this proof.
		\end{proof}

	\subparagraph{The reachability problem.}
	We begin with the Ackermann-completeness proof.
	\begin{theorem}\label{reachability-ackermann}
		Reachability in monus semantics is Ackermann-complete.
	\end{theorem}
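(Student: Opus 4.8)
The plan is to prove Ackermann-completeness by showing that monus reachability and classical VASS reachability are interreducible, so that the known Ackermann bounds for the classical problem transfer in both directions. The backbone for both reductions is \cref{charact-reachability}, which I first restate in a slightly more convenient form: $s(\vv) \weaksteps t(\ww)$ holds if and only if there is a path $\rho$ and a vector $\vv' \ge \vv$ with $s(\vv') \xsteps{\rho} t(\ww)$ in the classical semantics such that every coordinate $j$ with $\vv'[j] > \vv[j]$ hits $0$ along this run. One direction of this reformulation takes $\vv' = \vv - \mm$ and invokes \cref{prop:weak-class}; the other merges the set $Z$ with the set of strictly-loaded coordinates.

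For the upper bound I would reduce monus reachability to classical reachability. Given $\cV$, $s(\vv)$ and $t(\ww)$, I build a classical VASS that (i) in an initial \emph{loading} phase uses self-loops to add an arbitrary nonnegative amount to a guessed subset $Z \subseteq \{1,\dots,d\}$ of coordinates, reaching some $\vv' \ge \vv$ with $\vv'[j]=\vv[j]$ off $Z$; (ii) simulates $\cV$ classically; and (iii) enforces that each $z \in Z$ reaches value $0$ at some point of the simulated run. By the restated characterization, the constructed instance has a classical run to $t(\ww)$ exactly when $s(\vv) \weaksteps t(\ww)$. Since this reduction is elementary and classical reachability lies in Ackermann, membership follows.

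For the lower bound I would reduce classical reachability to monus reachability, forcing the monus run to be non-lossy so that, by \cref{rem:vass-approxim}, it coincides with a genuine classical run. After noting that we may assume $\ww > \bzero$ (append a transition adding $\mathbf 1$ to every counter and shift the target accordingly), I augment $\cV$ with, for each counter $j$, two \emph{accumulators} $T_j$ and $M_j$: every original update $\zz$ additionally adds $\max(\zz[j],0)$ to $T_j$ and $\max(-\zz[j],0)$ to $M_j$, with $T_j$ initialized to $\vv[j]$ and $M_j$ to $0$. Since $T_j,M_j$ are only incremented during simulation they never lose, and a simulation is non-lossy on counter $j$ if and only if the monus value $c_j$ equals $T_j - M_j$ (the $\Z$-value). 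After the simulation reaches $q$ with $c_j = \ww[j]$, a verification phase repeatedly decrements $T_j$ and $M_j$ together and requires the target values $M_j=0$ and $T_j=\ww[j]$; this target is reachable exactly when $T_j - M_j = \ww[j] = c_j$, i.e.\ when no loss occurred. Hence the monus instance reaches its target iff $\cV$ has a classical run $s(\vv) \steps t(\ww)$, giving Ackermann-hardness.

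The main obstacle is step (iii) of the upper bound: certifying inside a classical VASS that a loaded counter actually reaches $0$, since classical VASS cannot zero-test mid-run, and over-loading a counter that never touches $0$ would create spurious runs (\eg a single $+1$ step would otherwise let one reach every value above the honest target). The intended resolution exploits that reachability pins both endpoints exactly: one decomposes the run at a zero-crossing of each $z \in Z$ and tracks, via auxiliary control states and counters, that the loaded amount is reclaimed only through a configuration with $c_z = 0$, so that the exact final constraint rules out illegitimate loads. On the lower-bound side the analogous subtlety is guaranteeing that the verification phase cannot be satisfied by a lossy run, which is precisely where the assumption $\ww > \bzero$ is used.
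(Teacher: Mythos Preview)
Your upper bound follows the paper's line (guess $Z$, load, simulate, certify zero-hits), and you correctly identify step~(iii) as the crux; but your resolution remains at the level of a slogan. The phrase ``the loaded amount is reclaimed only through a configuration with $c_z=0$'' does not specify a mechanism that a classical VASS can actually implement, and a reader cannot reconstruct one from it. The paper's device is concrete: for each $z\in Z$ keep an extra \emph{copy} counter that is updated in lockstep with counter~$z$ both during the loading loops and during the simulation, until a nondeterministically chosen step (ordered by a guessed bijection $\sigma\colon[1,k]\to Z$, encoded in the control state) at which that copy is \emph{frozen}; the reachability target then requires every frozen copy to equal~$0$. Since copy and original coincide up to the freeze point, a zero target on the copy forces the original to have hit~$0$, with no zero-test needed anywhere. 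You should spell this (or an equivalent gadget) out rather than leave it implicit.

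Your lower bound, by contrast, is correct and takes a genuinely different route from the paper's. You use $2d$ accumulators $T_j,M_j$ (positive and negative parts of each coordinate's updates), which are only incremented during simulation and hence never lossy; the monus countdown can then reach $(T_j,M_j)=(\ww[j],0)$ with $\ww[j]>0$ iff $T_j-M_j\ge\ww[j]$, and combined with $\ww[j]=c_j\ge T_j-M_j$ this forces equality and hence non-lossiness on every coordinate. The paper instead reduces from $\bzero$-to-$\bzero$ reachability and adds only \emph{two} extra counters, accumulating $\max(\Delta(\zz),0)$ and $\max(-\Delta(\zz),0)$ for $\Delta(\zz)=\sum_j\zz[j]$; equality of these two at the end forces $\sum_j\vv[j]=0$ for the $\Z$-endpoint $\vv\le\bzero$, whence $\vv=\bzero$, and then \cref{prop:weak-integer} rules out any negative excursion. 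Your per-coordinate argument is more transparent and works for arbitrary endpoints (after your $\ww>\bzero$ shift); the paper's is more economical in dimension and reuses the same aggregate trick later for the $2$-VASS hardness.
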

	For the upper bound we show how to reduce reachability in monus semantics to reachability in classical semantics.
	Let $\cV=(d,Q,\Delta)$, $s(\vv)$, and $t(\ww)$ be the input of the reachability problem in monus semantics.
	We rely on \cref{charact-reachability}. Intuitively, we have to guess a subset $Z\subseteq\{1,\ldots,d\}$ and a permutation $\sigma\colon[1,k]\to Z$ (where $k=|Z|$). Then we check whether there exists a run as described in \cref{charact-reachability} with $z_i=\sigma(i)$ for $i\in[1,k]$. To detect the latter run, we construct the VASS $\cV_\sigma = (d+k,Q',T')$ as follows. It simulates $\cV$, but it has $k$ extra counters to freeze the values of the counter in $Z$ at the points where the coordinates $\sigma(k), \ldots, \sigma(1)$ hit $0$ as mentioned in \cref{charact-reachability}.
	
	To remember which counters have already been frozen the set of control states is $Q' = \{q_i\mid q\in Q,~i\in[0,k]\}$. Intuitively, the index $i\in[0,k]$ stores the information how many counters are frozen. The index $i$ can only increment. Note that guessing the permutation $\sigma$ allows us to assume that we know the order in which the counters are frozen.
	
	Since we deal with vectors in dimension $d$ and $d+k$ we introduce some helpful notation.
	We write $\ee_j \in \Z^d$ for the unity vector with $\ee_j[j] = 1$ and with $0$ on other coordinates.
	Given a vector $\zz \in \Z^d$ we define $\copyv{\zz}\in\Z^{d+k}$ as $\copyv{\zz}[j] = \zz[j]$ for $1 \le j \le d$ and $\copyv{\zz}[j] = \zz[\sigma(j-d)]$ for $d < j \le d+k$.
	Intuitively, it simply copies the behaviors of the corresponding counters.
	We generalise this notation to allow to also remove the effect on some coordinates
	(\ie ``freeze'' them).
	Given $\zz \in \Z^{d}$ and $0 \le i \le k$ we define $\copyi{\zz} \in \Z^{d+k}$ as $\copyi{\zz}[j] = \copyv{\zz}[j]$ for $1 \le j \le d+ k-i$ and $\copyi{\zz}[j] = 0$ for $d+ k-i < j \le d+k$. In particular $\mathsf{copy}_0(\zz) = \mathsf{copy}(\zz)$ and $\copyi{\zz}$ is $0$ in the last $i$ counters.
	
	It remains to define the set of transitions $T'$. 
	In the beginning there are transitions in $T'$ that can arbitrarily increment each counter that belongs to $Z$ and its extra copy: $(s_0,\copyv{\ee_j},s_0) \in T'$ for every $j \in Z$. Moreover, the counter in the control state can spontaneously be incremented: $(p_i,\bzero,p_{i+1})$ for every $p\in Q$ and $0\le i < k$.
	For every transition $(p,\zz,q) \in T$ and $0 \le i \le k$ we define $(p_i,\copyi{\zz},q_i) \in T'$.
	
	The following claim is straightforward by \cref{charact-reachability}:
	\begin{claim}
		We have $s(\vv) \mathrel{{\weaksteps}{}_{\cV}} t(\ww)$ if and only if there exists a subset $Z\subseteq\{1,\ldots,d\}$ and bijection $\sigma\colon [1,k]\to Z$ such that $s_0(\mathsf{copy}_0(\vv))\steps_{\cV_\sigma} t_k(\mathsf{copy}_k(\ww))$.
	\end{claim}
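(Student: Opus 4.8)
The plan is to treat \cref{charact-reachability} as a bridge. I will show that, for a fixed choice of $Z$ and bijection $\sigma\colon[1,k]\to Z$ with $k=|Z|$, a classical run $s_0(\mathsf{copy}_0(\vv)) \steps_{\cV_\sigma} t_k(\mathsf{copy}_k(\ww))$ exists precisely when there is a path $\rho$ of $\cV$ together with a vector $\vv'\ge\vv$ satisfying conditions \ref{charact-reachability:steps}--\ref{charact-reachability:initial} of \cref{charact-reachability}, where $Z$ is the set of coordinates that hit $0$ and $\sigma$ records the order in which they do so. Since \cref{charact-reachability} already equates the latter with $s(\vv)\xweaksteps{\rho}t(\ww)$, and $s(\vv)\weaksteps t(\ww)$ just means that some such $\rho$ exists, the Claim follows once both implications of this bridge are established. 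Throughout I use that, by construction, the first $d$ counters of $\cV_\sigma$ mirror those of $\cV$ (up to the initial $Z$-increments), while the extra counter $d+\ell$ is a copy of counter $\sigma(\ell)$ that the index mechanism ``freezes'' as soon as the control index reaches $k-\ell+1$.

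For the forward implication I would start from a monus run $s(\vv)\xweaksteps{\rho}t(\ww)$ and apply \cref{charact-reachability} to obtain $Z$ and $\vv'\ge\vv$ with $s(\vv')\xsteps{\rho}t(\ww)$, every $z\in Z$ hitting $0$, and $\vv'[j]=\vv[j]$ off $Z$. I then choose $\sigma$ by picking, for each $z\in Z$, a position in this classical run where $z$ equals $0$, and sorting these positions so that $\sigma(k)$ hits $0$ earliest and $\sigma(1)$ latest. The $\cV_\sigma$ run is assembled in three parts: first the self-loops $(s_0,\copyv{\ee_j},s_0)$ are used $\vv'[j]-\vv[j]$ times for each $j\in Z$, raising the configuration from $s_0(\mathsf{copy}_0(\vv))$ to $s_0(\copyv{\vv'})$ (each such loop raises both counter $j$ and its copy); then $\rho$ is simulated via the transitions $(p_i,\copyi{\zz},q_i)$, inserting an index increment $(p_i,\bzero,p_{i+1})$ exactly at the chosen position where $\sigma(k-i)$ equals $0$. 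Because each copy $d+\ell$ tracks its original until the freeze and is frozen precisely when that original is $0$, all extra counters end at $0$, while the first $d$ coordinates reach $\ww$; hence the run ends in $t_k(\mathsf{copy}_k(\ww))$. Classical validity is immediate, as the first $d$ coordinates follow the valid run $s(\vv')\xsteps{\rho}t(\ww)$ and the copies only ever equal a nonnegative original or the frozen value $0$.

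For the backward implication, given a run $s_0(\mathsf{copy}_0(\vv))\steps_{\cV_\sigma}t_k(\mathsf{copy}_k(\ww))$, I would first normalize it so that all self-loops $(s_0,\copyv{\ee_j},s_0)$ occur at the very beginning. This is justified by observing that every self-loop fires at $s_0$ in index $0$, hence strictly before any counter is frozen (the first freeze coincides with leaving index $0$); relocating such an increment to the front therefore leaves unchanged both the value of every copy at its freeze point and the final configuration, and only raises intermediate counter values, preserving classical validity. After normalization the run factors as an increment phase reaching $s_0(\copyv{\vv'})$ with $\vv'=\vv+\uu$ for some $\uu\ge\bzero$ supported on $Z$, followed by a simulation of a path $\rho$ interleaved with index increments. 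Projecting onto the first $d$ counters yields a classical run $s(\vv')\xsteps{\rho}t(\ww)$, giving condition \ref{charact-reachability:steps}, and $\uu$ being supported on $Z$ gives condition \ref{charact-reachability:initial}. For condition \ref{charact-reachability:add-minimum} I use that each copy $d+\ell$ equals counter $\sigma(\ell)$ until its freeze and ends at $0$, so $\sigma(\ell)$ must equal $0$ at the freeze position; as $\sigma$ is onto $Z$, every $z\in Z$ hits $0$. \cref{charact-reachability} then yields $s(\vv)\xweaksteps{\rho}t(\ww)$, and in particular $s(\vv)\weaksteps t(\ww)$.

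I expect the main obstacle to be the bookkeeping around the freezing mechanism. In the forward direction one must check that the hitting positions of the coordinates in $Z$ can always be linearly ordered so that the single, monotone control index freezes the copies in a temporally consistent way, handling ties by consecutive $\bzero$-labelled index increments at the same position. In the backward direction the delicate point is the normalization, namely that moving all self-loops to the start disturbs neither the frozen copy values nor the target configuration. Both rely on the structural fact that increments happen only in index $0$, strictly before any freeze.
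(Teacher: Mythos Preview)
Your proposal is correct and follows exactly the route the paper intends: the paper's own proof is the single sentence ``straightforward by \cref{charact-reachability}'', and you have carefully unpacked precisely that argument, using the characterization as a bridge and handling the bookkeeping around the freezing mechanism and the normalization of the $s_0$-loops in the way the construction was designed for.
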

	This implies that we can decide monus reachability by guessing a subset $Z\subseteq[1,d]$, guessing a bijection $\sigma\colon[1,k]\to Z$, and deciding reachability in $\cV_\sigma$. This yields the upper bound.
	
	For the lower bound we reduce classical reachability to monus reachability.
	Let $\cV=(d,Q,\Delta)$, $s(\bzero)$ and $t(\bzero)$ be the input of the reachability problem in classical semantics (without loss of generality the input vectors can be $\bzero$).
	We construct the VASS $\cV' = (d+2,Q',T')$ as follows. The states are $Q' = Q \cup \set{t'}$, where $t'$ is a fresh copy of $t$.
	
	Again to deal with vectors in different dimension we introduce the following notation. Given $\zz \in \Z^d$ we write $\Delta(\zz) \in \Z$ for $\Delta(\zz) = \sum_{j = 1}^d \zz[j]$, \ie the sum of all components.
	Based on this we define $\extend{\zz} \in \Z^{d+2}$ as:
	$\extend{\zz} = (z,\Delta(z),0)$ if $\Delta(\zz) \ge 0$, and $\extend{\zz} = (z,0,-\Delta(z))$ otherwise.
	
	We define $T'$ as follows. For every $(p, \zz,q) \in T$: $(p,\extend{\zz},q) \in T'$.
	Thus, in the 
	$(d+1)$-th counter, we collect the sum of all non-negative entry sums of the added
	vectors. Analogously, in the $(d+2)$-th counter, we collect the sum of all negative entry sums (with a flipped sign). 
	We also add the transition $(t,\bzero, t') \in T'$, and a ``count down'' loop: $(t'(\bzero,-1,-1), t')$, where $(\bzero,-1,-1)$ is $0$ in the first $d$ components and $-1$ otherwise. The following claim completes the proof of Ackermann-hardness.
	\begin{claim}
		We have $s(\bzero,1,1)\weaksteps t'(\bzero,1,1)$ in $\cV'$ if and only if
		$s(\bzero)\steps t(\bzero)$ in $\cV$.
	\end{claim}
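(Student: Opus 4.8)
The plan is to read off, from the syntactic shape of $\cV'$, what any run $s(\bzero,1,1)\weaksteps t'(\bzero,1,1)$ must look like, and then to translate the constraint ``end at $(\bzero,1,1)$'' into conditions on the simulated path. Since the only transition entering $t'$ is $(t,\bzero,t')$ and the only transition leaving $t'$ is the self-loop $(t',(\bzero,-1,-1),t')$, every such run factors as a path $\rho=(s,\zz_0,p_1)\cdots(p_{k-1},\zz_{k-1},t)$ staying inside $Q$, followed by the silent move into $t'$, followed by $m$ iterations of the count-down loop for some $m\ge 0$. Write $P=\sum_{i\,:\,\Delta(\zz_i)\ge 0}\Delta(\zz_i)$ and $N=-\sum_{i\,:\,\Delta(\zz_i)<0}\Delta(\zz_i)$. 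Because $\extend{\cdot}$ is non-negative on coordinates $d+1$ and $d+2$, these two counters are only incremented while $\rho$ is simulated, so they reach $1+P$ and $1+N$ at $t$, incurring no monus loss; meanwhile the first $d$ counters carry the endpoint $\ww$ of the monus run $s(\bzero)\xweaksteps{\rho}t(\ww)$.

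Next I would handle the count-down arithmetic. The loop sends each fresh counter value $v$ to $\max(v-m,0)$, so after $m$ iterations counters $d+1,d+2$ hold $\max(1+P-m,0)$ and $\max(1+N-m,0)$, while the first $d$ counters are untouched and still equal $\ww$. Requiring the target $(\bzero,1,1)$ forces $\ww=\bzero$ and $1+P-m=1=1+N-m$, hence $m=P=N$. In particular $P=N$, which is exactly $\Delta(\sum_i\zz_i)=0$, since $P-N=\sum_i\Delta(\zz_i)=\Delta(\sum_i\zz_i)$. So reaching $t'(\bzero,1,1)$ is equivalent to the conjunction: $\rho$ admits a monus run from $s(\bzero)$ ending with $\ww=\bzero$ on the first $d$ counters, and the total displacement of $\rho$ has coordinate-sum $0$.

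The heart of the proof, and the step I expect to be the main obstacle, is turning these two conditions into a genuine classical run, and for this I would invoke \cref{prop:weak-integer}. Let $\mm=\mi(\rho,s,\bzero)\le\bzero$ and let $\vv_k$ be the endpoint of the integer run of $\rho$; the proposition gives $\ww=\vv_k-\mm$ on the first $d$ coordinates, and $\vv_k=\sum_i\zz_i$ since the integer run starts at $\bzero$. From $\ww=\bzero$ we obtain $\sum_i\zz_i=\vv_k=\mm\le\bzero$ coordinatewise. The crucial observation is that a coordinatewise non-positive vector whose coordinates sum to $0$ (which is what $\Delta(\sum_i\zz_i)=0$ asserts) must be $\bzero$; hence $\sum_i\zz_i=\bzero$ and $\mm=\bzero$. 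Since $\mm=\bzero$ the integer run of $\rho$ never drops below $\bzero$, so by \cref{rem:vass-approxim} it is already a classical run, and it ends at $t(\bzero)$. This yields the forward implication.

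The converse is then routine: given a classical run $s(\bzero)\xsteps{\rho}t(\bzero)$, it is non-lossy, so by \cref{rem:vass-approxim} the monus run on $\rho$ coincides with it and ends at $\bzero$ on the first $d$ counters; its effect is $\bzero$, so $P=N$ as above. Simulating $\rho$ in $\cV'$ raises the fresh counters to $1+P$ and $1+N$, and then $m=P=N$ count-down steps land exactly in $t'(\bzero,1,1)$.
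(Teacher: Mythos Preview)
Your proof is correct and follows essentially the same approach as the paper: both factor the run, exploit that counters $d+1,d+2$ are never decremented along $\rho$ to obtain $P=N$ (equivalently $\Delta(\sum_i\zz_i)=0$), invoke \cref{prop:weak-integer} to compare the monus and $\Z$ runs, and use the observation that a coordinatewise non-positive vector with coordinate sum zero must be $\bzero$. Your derivation is slightly more explicit---you compute $\mm=\bzero$ directly (which simultaneously yields $\vv_k=\bzero$ and that the $\Z$-run stays non-negative), whereas the paper first shows $\vv=\bzero$ and then separately argues by contradiction that no counter drops below zero---but the underlying argument is the same.
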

	
	\begin{claimproof}
		($\impliedby$)
		This is obvious, because every run in classical semantics yields a run in
		monus semantics between the same configurations.
		
		($\implies $)
		Suppose there is a monus run from $s(\bzero,1,1)$ to
		$t'(\bzero,1,1)$. Then for some $m\in\N$, there is a transition sequence $\rho$
		leading in monus semantics from $s(\bzero,1,1)$ to $t(\bzero,m,m)$. Now let us
		execute $\rho$ in $\Z$-semantics. This execution will arrive at some
		configuration $t(\vv,m,m)$ (note that the last two counters are never decreased, except for the final loop).  We shall prove that (i)~$\vv=\bzero$ and (ii)~this
		execution never drops below zero. First, according to \cref{prop:weak-integer},
		the resulting counter values in monus semantics are always at least the values
		from $\Z$-semantics.  This implies $\vv\le\bzero$.  Next observe that since the
		right-most components have the same value $m$, the total sum of all entry sums
		of added vectors (in the first $d$ entries) must be zero. Thus, $\Delta(\vv)=0$.
		Together with $\vv\le\bzero$, this implies $\vv=\bzero$, which shows (i).
		Second, if the execution in $\Z$-semantics ever drops below zero in some
		counter $i$, then by \cref{prop:weak-integer} and the fact that in
		$\Z$-semantics we reach $\vv=\bzero$, this would imply that $\rho$ in monus
		semantics ends up in a strictly positive value in counter $i$, which is not
		true. This shows (ii).  Hence, we have shown that the run in $\Z$-semantics is
		actually a run in classical VASS semantics. Therefore, $s(\bzero)\steps
		t(\bzero)$ in $\cV$.
	\end{claimproof}

	\subparagraph{Characterizing zero-reachability.} 
	Monus zero-reachability has a simple characterization in terms of classical coverability.
	Here, $\rev{\cV}$ is obtained by reversing all transitions in $\cV$ and their effects. Formally, there is a transition $(p, \zz, q)$ in $\rev{\cV}$ iff there is a transition $(q, -\zz, p)$ in $\cV$.
	\begin{lemma}\label{charact-zero-reachability}
		For any $\vv$, we have $s(\vv) \mathrel{{\weaksteps}{}_{\cV}}
		t(\mathbf{0})$ iff 
		$t(\mathbf{0})\steps_{\rev{\cV}}s(\vv')$ for some $\vv' \ge \vv$.
	\end{lemma}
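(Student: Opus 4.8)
The plan is to reduce both directions to the standard duality between classical runs of $\cV$ and of $\rev{\cV}$ under path reversal, combined with the already-established propositions relating monus runs to classical runs. The duality I would record first is this: a classical step $p(\aa)\step q(\aa+\zz)$ in $\cV$ requires $\aa\ge\bzero$ and $\aa+\zz\ge\bzero$, and these are exactly the conditions under which the reversed transition $(q,-\zz,p)$ of $\rev{\cV}$ realises the classical step $q(\aa+\zz)\step p(\aa)$. Iterating along a path, $s(\aa)\xsteps{\rho}t(\bb)$ holds in $\cV$ if and only if $t(\bb)\xsteps{\rev{\rho}}s(\aa)$ holds in $\rev{\cV}$, where $\rev{\rho}$ reverses the order of transitions and negates their effects; in particular $s(\aa)\steps_{\cV}t(\bb)$ iff $t(\bb)\steps_{\rev{\cV}}s(\aa)$.

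For the forward direction ($\Rightarrow$), suppose $s(\vv)\xweaksteps{\rho}t(\bzero)$ for some path $\rho$. Applying \cref{prop:weak-class} with $\mm=\mi(\rho,s,\vv)$ produces a classical run $s(\vv-\mm)\xsteps{\rho}t(\bzero)$, and since $\mm\le\bzero$ the vector $\vv':=\vv-\mm$ satisfies $\vv'\ge\vv$. Reversing this classical run via the duality above yields $t(\bzero)\xsteps{\rev{\rho}}s(\vv')$ in $\rev{\cV}$, hence $t(\bzero)\steps_{\rev{\cV}}s(\vv')$ for some $\vv'\ge\vv$, as required. (One could equally extract the classical run from condition~\ref{charact-reachability:steps} of \cref{charact-reachability}.)

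For the backward direction ($\Leftarrow$), suppose $t(\bzero)\steps_{\rev{\cV}}s(\vv')$ with $\vv'\ge\vv$, witnessed by a path $\tau$ of $\rev{\cV}$. Reversing gives a classical run $s(\vv')\xsteps{\rho}t(\bzero)$ in $\cV$ with $\rho=\rev{\tau}$. By \cref{rem:vass-approxim} every classical run is also a monus run on the same path, so $s(\vv')\xweaksteps{\rho}t(\bzero)$. I would then invoke monotonicity of monus semantics: because vector addition and the componentwise $\max(\cdot,\bzero)$ are both monotone, running a fixed path from a smaller start vector yields pointwise smaller configurations at every position. From $\vv\le\vv'$ we thus obtain $s(\vv)\xweaksteps{\rho}t(\uu)$ with $\uu\le\bzero$; since monus configurations live in $\N^d$ we have $\uu\ge\bzero$, forcing $\uu=\bzero$ and giving $s(\vv)\weaksteps t(\bzero)$.

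The only step that is not a direct citation is the backward-direction argument: the monotonicity of monus runs together with the observation that a monus configuration cannot be negative, which collapses $\uu\le\bzero$ to $\uu=\bzero$. I expect this to be the main (though mild) obstacle, as it is precisely what allows an \emph{arbitrary} $\vv'\ge\vv$ in the classical (reversed) run to be lowered to $\vv$ on the monus side with no side condition on which coordinates hit $0$, thereby matching the clean ``$\vv'\ge\vv$'' shape of the statement.
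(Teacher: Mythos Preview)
Your proof is correct. The forward direction is essentially the paper's argument (the paper cites \cref{charact-reachability} rather than \cref{prop:weak-class}, but as you note, these amount to the same thing here).

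For the backward direction you take a slightly different route. The paper observes that in the classical run $s(\vv')\steps t(\bzero)$ every coordinate hits $0$ at the final configuration, so one may take $Z=[1,d]$ in \cref{charact-reachability}; condition~(3) is then vacuous and the ``if'' direction of that proposition yields $s(\vv)\weaksteps t(\bzero)$ directly. You instead argue via monotonicity of the monus step function and the squeeze $\bzero\le\uu\le\bzero$. Both arguments are short and sound; the paper's is more uniform (one proposition handles both directions), while yours is self-contained and does not need the full machinery of \cref{charact-reachability} for this implication.
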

	\begin{proof}
		By \cref{charact-reachability},  $s(\vv) \weaksteps t(\mathbf{0})$
		yields a $\vv' \ge \vv$ with $s(\vv') \steps t(\mathbf{0})$.
		Conversely, if $s(\vv')\steps t(\bzero)$, then we can pick $Z=[1,d]$
		in \cref{charact-reachability} to obtain $s(\vv)\weaksteps
		t(\bzero)$.
	\end{proof}
	This together with the known complexity of classical coverability \cite{lipton1976reachability,Rackoff78} immediately implies:
	
	\begin{proposition}\label{weak-zero-reachability-expspace}
		The monus zero-reachability problem is $\EXPSPACE$-complete.
	\end{proposition}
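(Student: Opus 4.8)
The plan is to observe that \cref{charact-zero-reachability} already provides reductions in both directions between monus zero-reachability and classical coverability, so that the $\EXPSPACE$-completeness of the latter \cite{lipton1976reachability,Rackoff78} transfers directly. Both reductions amount to nothing more than reversing the VASS and swapping the roles of source and target configuration.

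For the upper bound, I would read \cref{charact-zero-reachability} as a reduction of monus zero-reachability to classical coverability. Given an instance $(\cV, s(\vv), t)$ of monus zero-reachability, the question ``$s(\vv) \weaksteps_{\cV} t(\bzero)$?'' is, by the lemma, equivalent to ``is there $\vv'\ge\vv$ with $t(\bzero)\steps_{\rev{\cV}}s(\vv')$?'', which is precisely the classical coverability instance in $\rev{\cV}$ with source $t(\bzero)$ and target $s(\vv)$ to be covered. Since $\rev{\cV}$ is computable in logarithmic space and classical coverability lies in $\EXPSPACE$ by Rackoff's theorem \cite{Rackoff78}, monus zero-reachability is in $\EXPSPACE$.

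For the lower bound, I would run the same equivalence in the opposite direction, starting from an instance of classical coverability, which is $\EXPSPACE$-hard already by Lipton's construction \cite{lipton1976reachability}. Let $\cA$ be a VASS asking whether $a(\uu)$ can cover $b(\mm)$. First I would normalise to $\uu=\bzero$ (for instance by prepending a single transition that adds $\uu$), so the instance asks whether $a(\bzero)\steps_{\cA}b(\vv')$ for some $\vv'\ge\mm$. Now set $\cV:=\rev{\cA}$; since reversing is an involution, $\rev{\cV}=\cA$. Applying \cref{charact-zero-reachability} with $s:=b$, $\vv:=\mm$, and $t:=a$ gives that $b(\mm)\weaksteps_{\rev{\cA}}a(\bzero)$ holds iff $a(\bzero)\steps_{\cA}b(\vv')$ for some $\vv'\ge\mm$, i.e.\ iff the coverability instance is positive. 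Thus $(\rev{\cA}, b(\mm), a)$ is the desired monus zero-reachability instance, and the map is computable in logarithmic space, establishing $\EXPSPACE$-hardness.

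Since neither direction requires more than reversing the VASS and swapping source and target, I do not expect any real obstacle; the only points needing a line of care are that reversal is an involution (so that the double application $\rev{\rev{\cA}}=\cA$ is sound) and the harmless normalisation to a zero source vector in the coverability instance. Both reductions are clearly in logarithmic space, so the two bounds together yield $\EXPSPACE$-completeness.
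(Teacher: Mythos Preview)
Your proposal is correct and follows essentially the same approach as the paper: the paper derives \cref{weak-zero-reachability-expspace} directly from \cref{charact-zero-reachability} together with the known $\EXPSPACE$-completeness of classical coverability~\cite{lipton1976reachability,Rackoff78}, exactly as you do. Your write-up simply spells out the two reductions (and the harmless normalisation to a zero source) more explicitly than the paper, which dispatches both directions in a single sentence.
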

	
	\subparagraph{Characterizing coverability.} Our third characterization describes coverability in monus semantics in terms of reachability in $\Z$-semantics:
	\begin{proposition}\label{charact-coverability}
		Let $\cV=(d,Q,\Delta)$ be a VASS and let $s(\vv)$ and $t(\ww)$ be
		configurations. Then $s(\vv) \weaksteps t(\ww'')$ for some $\ww'' \ge \ww$ if and only
		if there is a permutation $\sigma$ of $\{1,\ldots,d\}$ and $\Z$-configurations
		$p_d(\vv_d),\ldots,p_1(\vv_1)$, $t(\ww')$ so that
		\begin{enumerate}
			\item $s(\vv)\zsteps p_d(\vv_d)\zsteps p_{d-1}(\vv_{d-1})\zsteps\cdots \zsteps p_1(\vv_1)\zsteps t(\ww')$,
			\item for each $j\in \{1,\ldots,d\}$, we have $\ww'[j]+|\min(\vv_{\sigma^{-1}(j)}[j], 0)|\ge \ww[j]$.
		\end{enumerate}
	\end{proposition}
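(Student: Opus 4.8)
The plan is to base both directions on \cref{prop:weak-integer}, which records exactly how a monus run relates to the $\Z$ run along the same path: if a path $\rho$ is run from $s(\vv)$ in both semantics, reaching $t(\ww')$ in $\Z$-semantics and $t(\ww'')$ in monus semantics, then $\ww''[j] = \ww'[j] - \mm[j]$ for every coordinate $j$, where $\mm = \mi(\rho,s,\vv) \le \bzero$ collects the clamped global minima of the $\Z$ run (so $-\mm[j] = |\mm[j]|$). Hence covering $\ww$ in monus semantics is equivalent to the inequalities $\ww'[j] + |\mm[j]| \ge \ww[j]$ for all $j$. The entire purpose of the characterization is to replace the global minimum $\mm[j]$, which may be attained at any of unboundedly many run positions, by a \emph{bounded} witness: $d$ named intermediate configurations together with a permutation $\sigma$ assigning to each coordinate the configuration that witnesses its minimum.

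For the direction $(\impliedby)$, I would take the $\Z$ run of condition~1 and replay its path $\rho$ in monus semantics. By \cref{prop:weak-integer} the monus run ends at $t(\ww'')$ with $\ww''[j] = \ww'[j] - \mm[j]$. Since $\vv_{\sigma^{-1}(j)}$ is one of the configurations visited by this $\Z$ run, its $j$-th entry is at least the global minimum, so $\mm[j] \le \vv_{\sigma^{-1}(j)}[j]$ and therefore $|\mm[j]| \ge |\min(\vv_{\sigma^{-1}(j)}[j],0)|$. Combining with condition~2 gives $\ww''[j] = \ww'[j] + |\mm[j]| \ge \ww'[j] + |\min(\vv_{\sigma^{-1}(j)}[j],0)| \ge \ww[j]$ for every $j$, whence $\ww'' \ge \ww$. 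This direction is essentially immediate.

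For the direction $(\implies)$, suppose $s(\vv) \weaksteps t(\ww'')$ with $\ww'' \ge \ww$ along some path $\rho$. Running $\rho$ in $\Z$-semantics yields $s(\vv) \zsteps t(\ww')$, and \cref{prop:weak-integer} again gives $\ww''[j] = \ww'[j] + |\mm[j]|$, so $\ww'[j] + |\mm[j]| \ge \ww[j]$ for all $j$. It remains to produce the $d$ witness configurations and the permutation. For each coordinate $j$ I would fix a position of the $\Z$ run at which coordinate $j$ attains its minimum $\mm[j]$ (when $\mm[j]=0$ any position works, since then $\ww[j]-\ww'[j]\le 0$ already). Ordering the coordinates by the run-position of their chosen minimum yields the order in which these $d$ configurations appear along the run; listing them in that order as $p_d(\vv_d),\ldots,p_1(\vv_1)$ and defining $\sigma$ so that $\sigma^{-1}(j)$ is the slot occupied by coordinate $j$'s minimum configuration gives condition~2, because then $\ww'[j] + |\min(\vv_{\sigma^{-1}(j)}[j],0)| = \ww'[j] + |\mm[j]| \ge \ww[j]$. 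Condition~1 holds since these configurations lie on the $\Z$ run in the listed order, the intervening segments being genuine $\zsteps$ subruns.

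The main obstacle is the bookkeeping in $(\implies)$: distinct coordinates may attain their minima at the very same run position, so the map sending a coordinate to the position of its minimum need not be injective, yet the statement demands exactly $d$ ordered configurations bound bijectively to coordinates by $\sigma$. The fix is to permit consecutive witness configurations to coincide, filling distinct slots with the same actual configuration via the reflexive case $p(\vv)\zsteps p(\vv)$; this realizes $d$ slots no matter how the minima cluster, with $\sigma$ merely recording which slot serves which coordinate. Once this padding is in place, the verification of conditions~1 and~2 is routine.
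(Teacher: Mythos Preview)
Your proposal is correct and follows essentially the same route as the paper's proof: both directions hinge on \cref{prop:weak-integer}, the forward direction extracts the permutation from the order in which the coordinates hit their minima along the $\Z$ run, and the backward direction replays the $\Z$ path in monus semantics and uses that the true minimum $|\mm[j]|$ dominates $|\min(\vv_{\sigma^{-1}(j)}[j],0)|$. Your explicit treatment of ties (several coordinates attaining their minimum at the same position, handled via the reflexive case of $\zsteps$) is a detail the paper leaves implicit.
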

	\begin{proof}
		($\implies$)
		Let $\rho$ be any path such that $s(\vv) \xweaksteps{\rho} t(\ww'')$ and $\ww'' \ge \ww$. Then, by \cref{prop:weak-integer} $s(\vv) \xzsteps{\rho} t(\ww''+\mm)$, where $\mm$ is the vector of minimum values in the $\Z$ run. The required permutation $\sigma$ represents the order $\sigma(d), \ldots, \sigma(1)$ in which these coordinates reach their corresponding minimum values.
		Hence, $s(\vv) \xzsteps{\rho} t(\ww''+\mm)$ is the same as $s(\vv)\zsteps p_d(\vv_d)\zsteps p_{d-1}(\vv_{d-1})\zsteps\cdots \zsteps p_1(\vv_1)\zsteps t(\ww')$, such that $\vv_d[\sigma(d)] = \mm[\sigma(d)], \ldots, \vv_1[\sigma(1)] = \mm[\sigma(1)]$, and $\ww''[j] = \ww'[j]-\mm[j] = \ww'[j]+|\mm[j]| = \ww'[j]+|\min(\vv_{\sigma^{-1}(j)}[j], 0)|$ for all $1\le j \le d$. As $\ww'' \ge \ww$,  $\ww'[j]+|\min(\vv_{\sigma^{-1}(j)}[j], 0)| \ge \ww[j]$ for all $1 \le j \le d$.  
		
		($\impliedby$)
		This is a direct consequence of \cref{prop:weak-integer}. It implies that given any permutation
		$\sigma$ on $\{1,\ldots,d\}$ and any run  
		$s(\vv)\zsteps p_d(\vv_d)\zsteps p_{d-1}(\vv_{d-1})\zsteps\cdots \zsteps p_1(\vv_1)\zsteps t(\ww')$ such that $\ww'[j] -\min(\vv_{\sigma^{-1}(j)} [j], 0) \ge \ww[j]$, 
		there is a run from configuration  $s(\vv)$ and reaching a configuration  $t(\ww'')$ where 
		$\ww''[j] = \ww'[j] - \mm[j] \ge \ww'[j] -\min(\vv_{\sigma^{-1}(j)}[j], 0) \ge \ww[j]$ for all $1\le j \le d$.  
	\end{proof}

	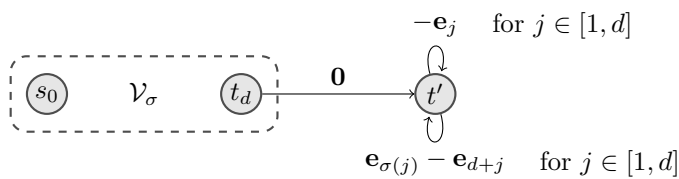
\begin{figure}
		\begin{center}
			\begin{tikzpicture}
				\node[state] (s) {$s_0$};
				\node[state,right=2cm of s] (t) {$t_d$};
				\node[state,right=2cm of t] (t') {$t'$};
				\path[->]
				(t) edge node[above] {$\bzero$} (t')
				(t') edge[loop above] node (jd) {$-\ee_j$} (t')
				(t') edge[loop below] node (jz) {$\ee_{\sigma(j)}-\ee_{d+j}$} (t')
				;
				\node[right=0.2cm of jd] {for $j\in [1,d]$};
				\node[right=0.2cm of jz] {for $j\in [1,d]$};
				
				\node (vsigma) at ($(s.center)!0.5!(t.center)$) {$\cV_\sigma$};
				\draw[dashed,rounded corners=6pt,color=black!70,thick] ($(s.west)+(-0.2,-0.5)$) rectangle ( $(t.east)+(0.2,0.5)$ );
				
			\end{tikzpicture}
			\caption{Construction of $\cV'_\sigma$ in reduction from monus coverability to reachability in $\Z$-semantics.}\label{coverability-to-integer}
		\end{center}
	\end{figure}
	We conclude the following.
	\begin{proposition}\label{weak-coverability}
		Monus coverability is $\NP$-complete.  
	\end{proposition}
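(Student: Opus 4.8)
The plan is to prove membership in $\NP$ and $\NP$-hardness separately, with the bulk of the work in the upper bound, which rests on the characterization in \cref{charact-coverability}.

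For the upper bound, \cref{charact-coverability} reduces monus coverability to the following: guess a permutation $\sigma$ of $\{1,\ldots,d\}$, and then decide whether there is a single $\Z$-run $s(\vv)\zsteps p_d(\vv_d)\zsteps\cdots\zsteps p_1(\vv_1)\zsteps t(\ww')$ passing through $d$ checkpoints in the order dictated by $\sigma$ and satisfying $\ww'[j]+|\min(\vv_{\sigma^{-1}(j)}[j],0)|\ge\ww[j]$ for every $j$. Since guessing $\sigma$ costs only polynomially many bits, it suffices, for a fixed $\sigma$, to reduce the second step to a single instance of $\Z$-reachability, which is $\NP$-complete~\cite{DBLP:conf/rp/HaaseH14}; the overall algorithm then nondeterministically guesses $\sigma$ together with an $\NP$-certificate for that instance, so it stays in $\NP$. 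First I would build the VASS $\cV'_\sigma$ sketched in \cref{coverability-to-integer}: it works over $2d$ counters, where counters $1,\ldots,d$ simulate $\cV$ in $\Z$-semantics while the control state of the inner gadget $\cV_\sigma$ is indexed by how many checkpoints have been passed, so that reaching its exit state $t_d$ forces exactly $d$ checkpoints in the order prescribed by $\sigma$, and at the $i$-th checkpoint the gadget commits the relevant value of coordinate $\sigma(i)$ to the auxiliary counter $d+i$. After $t_d$, the run moves to $t'$, where the loops $\ee_{\sigma(j)}-\ee_{d+j}$ transfer the recorded deficits back onto the original coordinates, thereby realizing the monus end vector $\ww''[j]=\ww'[j]+|\min(\vv_{\sigma^{-1}(j)}[j],0)|$, and the loops $-\ee_j$ discard any surplus, which accounts for the inequality ``$\ge$'' in coverability. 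Choosing the $\Z$-reachability target $t'(\ww,\bzero)$ (exactly $\ww$ on the first $d$ counters and $\bzero$ on the auxiliary ones) then simultaneously encodes that the deficits are consumed exactly and that every coordinate is covered. The equivalence ``$\cV'_\sigma$ reaches its target for some $\sigma$ iff $s(\vv)\weaksteps t(\ww'')$ for some $\ww''\ge\ww$'' is proved by matching runs along the two directions of \cref{charact-coverability}, using \cref{prop:weak-integer} to pass between a monus run and its $\Z$-shadow.

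For the lower bound I would show $\NP$-hardness by a reduction from \textsc{Subset-Sum}. Given nonnegative integers $a_1,\ldots,a_n$ and a target $T$ (all in binary), build a VASS with two counters and states $q_0,\ldots,q_n$, where from $q_{i-1}$ to $q_i$ there are two transitions, one adding $(a_i,0)$ (``item chosen'') and one adding $(0,a_i)$ (``item not chosen''). Since every update is nonnegative, no counter is ever clamped, so the monus, classical and $\Z$-semantics coincide on this VASS; moreover the two counters always sum to $A:=\sum_i a_i$ at $q_n$. Hence $q_0(\bzero)\weaksteps q_n(\ww'')$ with $\ww''\ge(T,\,A-T)$ holds iff some subset sums to exactly $T$ (the two covering inequalities add up to the forced total $A$, so both must be tight), giving $\NP$-hardness already in dimension $2$ with binary updates.

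The hard part will be the upper-bound construction $\cV'_\sigma$: the delicate point is forcing the auxiliary counters to encode precisely $|\min(\vv_{\sigma^{-1}(j)}[j],0)|$, the ReLU-like clamp at zero, using only $\Z$-semantics together with an exact-reachability target, while at the same time enforcing the checkpoint order through the state index and leaving room to trim the coverability slack. Getting the commit step at each checkpoint right, so that a nonnegative deficit is recorded exactly when coordinate $\sigma(i)$ dips to its minimum and nothing spurious is recorded otherwise, and verifying that the target $t'(\ww,\bzero)$ is reachable in $\cV'_\sigma$ if and only if conditions~1 and~2 of \cref{charact-coverability} hold, is where the care is needed.
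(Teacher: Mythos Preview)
Your proposal is correct, and the upper bound is essentially the paper's approach: guess the permutation $\sigma$, build the $2d$-dimensional VASS $\cV'_\sigma$ of \cref{coverability-to-integer} around the freezing gadget $\cV_\sigma$, and reduce to a single $\Z$-reachability instance with target $t'(\ww,\bzero)$.

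The lower bound, however, takes a different route. The paper reduces from the $\NP$-hard problem of deciding whether an NFA accepts a word in which every letter occurs exactly once~\cite[Prop.~5.11]{DBLP:journals/corr/Kopczynski15}; the resulting VASS has dimension $d=|\Sigma|$ and \emph{unary} updates (each transition adds a unit vector), so hardness holds already under unary encoding. Your reduction from \textsc{Subset-Sum} into a $2$-VASS with only nonnegative updates is correct and pleasantly elementary (monus coincides with classical semantics for free, and the two covering inequalities sum to the forced total, forcing equality), but it needs binary encoding. This still proves \cref{weak-coverability} as stated; it just yields a slightly weaker hardness claim than the paper's. Incidentally, the paper later gives a Subset-Sum reduction very close in spirit to yours (\cref{two-dim-np-hard-prop}) when treating the $2$-VASS case.
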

	
	\begin{proof}
		First we show $\NP$-hardness. In
		\cite[Prop.~5.11]{DBLP:journals/corr/Kopczynski15}, it is shown that it is
		$\NP$-hard to decide whether a regular language over some alphabet $\Sigma$,
		given as an NFA, contains a word in which every letter appears exactly once.
		Given such an NFA $\cA$ over $\Sigma=\{a_1,\ldots,a_d\}$, we construct a
		$d$-VASS $\cV$. The VASS $\cV$ simulates $\cA$ such that when $\cA$ reads
		$a_i$, $\cV$ increments counter $i$. Moreover, $\cV$ maintains a number
		$k\in\{0,\ldots,d\}$ in its state, which always holds the number of letters
		read so far. Thus, $\cV$ has states $q_k$, where $q$ is  a state of $\cA$ and
		$k\in\{1,\ldots,d\}$.  Moreover, let $s$ and $t$ be the initial and final state
		of $\cA$, respectively. Then in $\cV$, one can cover $t_d(1,\ldots,1)$ from
		$s_0(\bzero)$ in monus semantics if and only if $\cA$ accepts some word as above.
		
		We turn to the $\NP$ upper bound. Suppose we are given a
		$d$-VASS $\cV=(d,Q,\Delta)$ and configurations $s(\uu),t(\vv)$. We employ
		\cref{charact-coverability}. First non-deterministically guess a
		permutation $\sigma$ of $[1,d]$. We now construct a $2d$-VASS $\cV'_\sigma$ and two
		configurations $c'_1,c'_2$ such that in $\cV'_\sigma$, we have $c'_1\zsteps
		c'_2$ if and only if there is a run as in \cref{charact-coverability} with this
		$\sigma$. Since reachability in $\Z$-semantics is
		$\NP$-complete~\cite{DBLP:conf/rp/HaaseH14}, this yields the upper bound. 
		
		Our VASS $\cV'_\sigma$ is a slight extension of the VASS $\cV_\sigma$ from
		\cref{reachability-ackermann}, see \cref{coverability-to-integer}. Recall that
		for a permutation $\sigma\colon[1,k]\to Z$, $\cV_\sigma$ keeps $k$ extra
		counters that freeze the values of the counters in $Z$, in the order
		$\sigma(k),\sigma(k-1),\ldots,\sigma(1)$. We use this construction, but for our
		permutation $\sigma$ of $[1,d]$. Thus, $\cV_\sigma$ simulates a run of $\cV$
		and then freezes the counters $\sigma(d),\ldots,\sigma(1)$ in the extra $d$
		counters, in this order.  The steps that freeze counters define the vectors
		$\vv_d$, \ldots, $\vv_1$ in \cref{charact-coverability}. Note that for each
		$\vv_{i}$, only $\vv_{i}[\sigma(i)]$ is important.
		
		To verify the second condition in \cref{charact-coverability}, we introduce an extra state $t'$ and extra transitions as depicted in \cref{coverability-to-integer}.
		After executing $\cV_\sigma$, $\cV'_\sigma$ then has two
		types of loops: One to move tokens from the counters $d+j$ to counters
		$\sigma(j)$ (for each $j\in[1,d]$), and one to reduce tokens in counters $1,\ldots,d$.
		Thus there exists $\sigma$ such that $s_0(\mathsf{copy}_0(\uu)) \zsteps t'(\mathsf{copy}_d(\vv))$ in $\cV'_\sigma$ if and
		only if $s(\uu) \weaksteps t(\vv'')$ for some $\vv'' \ge \vv$ in $\cV$. This
		proves the $\NP$ upper bound. 
	\end{proof}

	\section{Two-dimensional VASS}\label{sec:two_dim}
	In this section we prove the results of \cref{table-results} related to $2$-VASS,
	both for unary and binary encoding.
	Note that for all three considered problems, reachability, zero reachability, and coverability,
	we always have an $\NL$ lower bound, inherited from state reachability in finite automata.
	The latter is well-known to be $\NL$-hard, and a VASS without counters (in all considered semantics) is a finite state automaton.

	When dealing with binary/unary updates one needs to be careful with the input size. In all problems suppose a VASS $\cV = (d,Q,T)$ is in the input. If we are interested in the unary encoding its size is defined as $d + |Q| + \sum_{(p,\zz,q) \in T}\lVert \zz \rVert$, where $\lVert \zz \rVert$ is the absolute value of the maximal coordinate in $\zz$. In the binary encoding one needs to change $\lVert \zz \rVert$ to $\lceil\log (\lVert \zz \rVert + 1)\rceil$. From this point onwards, we use the term \textit{succinct} VASS for VASS where updates are encoded in binary. 
	
	We consider each of the three problems separately.

	\subparagraph{Reachability}
	Here we only prove the $\PSPACE$ upper bound for monus reachability in binary encoded $2$-VASS,
	which implies the same upper bound for unary encoding.
	The $\PSPACE$ lower bound for binary encoding is inherited from zero reachability, see \cref{charact-zero-reach-2-VASS} below. 
	\begin{proposition}\label{weak-reach-2-VASS-upper}
		In succinct 2-VASS, reachability with monus semantics is in $\PSPACE$.
	\end{proposition}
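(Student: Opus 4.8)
The plan is to apply the characterization of monus reachability from \cref{charact-reachability} to reduce the problem to a family of classical reachability problems with prescribed intermediate configurations, and then invoke the $2$-VASS machinery of~\cite{DBLP:journals/jacm/BlondinEFGHLMT21}. By \cref{charact-reachability}, deciding $s(\vv) \weaksteps t(\ww)$ amounts to deciding, for each of the four subsets $Z \subseteq \{1,2\}$, whether there is a vector $\vv' \ge \vv$ with $\vv'[j] = \vv[j]$ for $j \notin Z$ and a classical run $s(\vv') \xsteps{\rho} t(\ww)$ in which every coordinate $z \in Z$ hits $0$. Since there are only four choices of $Z$, and within each choice at most $|Z|! \le 2$ orders in which the counters of $Z$ may first reach $0$ (which I also guess), it suffices to solve each such instance in $\PSPACE$ and take the disjunction.

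First I would eliminate the existential quantifier over $\vv'$. The only freedom lies in the coordinates of $Z$, where $\vv'[j]$ may be any value $\ge \vv[j]$. I encode this by prepending a fresh initial state $s_0$ carrying self-loops labeled $\ee_j$ for each $j \in Z$, together with a transition $(s_0,\bzero,s)$: starting from $s_0(\vv)$, these loops generate precisely the configurations $s(\vv')$ with $\vv' \ge \vv$ and $\vv'[j] = \vv[j]$ for $j \notin Z$. This adds one state and no counters, so the instance remains a succinct $2$-VASS. Next I would enforce the zero-hitting constraints: fixing an order in which the counters of $Z$ first reach $0$, I split the control graph into $|Z|+1$ consecutive copies joined by transitions that mark ``counter $z$ has now reached $0$''. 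The desired run then becomes a classical run that is forced to visit, in the fixed order, intermediate configurations of the prescribed shape (a chosen counter equal to $0$, the other unconstrained) before reaching $t(\ww)$.

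Thus each instance reduces to the following question: in a succinct $2$-VASS, does there exist a classical run from a source to a target configuration that passes through a constant-length sequence of intermediate configurations whose shapes (one counter fixed to $0$, the other free) are prescribed? The main obstacle is exactly this last question: each intermediate configuration carries an \emph{unbounded} free coordinate that is shared between the two adjacent phases of the run, so the phases cannot be decided independently by separate reachability queries. This is precisely where plain $2$-VASS reachability does not suffice and where a refined analysis of the reachability relation is required.

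To overcome this, I would use that the reachability relation of a succinct $2$-VASS is effectively semilinear and admits small witnesses via linear path schemes~\cite{DBLP:journals/jacm/BlondinEFGHLMT21}; with the slight extension of~\cite{BaumannMeyerZetzsche2022a}, a constant number of such relations can be composed through prescribed intermediate shapes while keeping witnesses of polynomial bit-length, so that the composed query is decidable in $\PSPACE$. Taking the disjunction over the constantly many choices of $Z$ and of the order then yields the $\PSPACE$ upper bound. I note that this argument is exactly what fails for unary updates (case~(i) in the introduction), where classical reachability is only $\NL$-complete but the intermediate-shape constraints still require the stronger $\PSPACE$-level analysis.
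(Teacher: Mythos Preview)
Your proposal is correct and follows essentially the same route as the paper: apply \cref{charact-reachability}, enumerate the finitely many choices of $Z$ (and orderings), absorb the existential over $\vv'$ via a fresh initial state with increment loops, and then appeal to the $\PSPACE$ machinery of~\cite{BaumannMeyerZetzsche2022a} to handle classical runs constrained to pass through intermediate configurations of a prescribed shape. The paper phrases the last step as a direct reduction to the \emph{constrained runs problem} of~\cite[Prop.~6.5]{BaumannMeyerZetzsche2022a} (encoding conditions~(2)--(4) as a quantifier-free Presburger formula over the intermediate counter values), whereas you describe it as composing a constant number of semilinear reachability relations through prescribed intermediate shapes; these are two packagings of the same underlying result, and your identification of this step as the crux (and the reason the argument does not transfer to the unary case) matches the paper exactly.
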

	According to \cref{charact-reachability}, reachability with
	monus semantics is equivalent to existence of a run under classical semantics,
	where said run is subject to some additional constraints.
	Recall that \emph{Presburger arithmetic} is the first-order theory of $(\N,+,<,0,1)$. 
	We observe that all the additional constraints of \cref{charact-reachability}
	can be expressed by quantifier-free Presburger formulas.
	This leads us to the so-called \emph{constrained runs problem} for succinct $2$-VASS,
	which was recently shown to be in $\PSPACE$~\cite{BaumannMeyerZetzsche2022a}, following
	the fact that classical reachability itself is $\PSPACE$-complete for succinct $2$-VASS~\cite{DBLP:journals/jacm/BlondinEFGHLMT21}.
	
	Formally, the \emph{constrained runs problem} for succinct $2$-VASS is the following:
	\begin{description}
		\item[Given] A succinct $2$-VASS $\cV$, a
		number $m\in\N$, states $q_1,\ldots,q_m$ in $\cV$,
		a quantifier-free Presburger formula
		$\psi(x_1,y_1,\ldots,x_m,y_m)$,
		and numbers $s,t\in[1,m]$ with $s\le t$.
		\item[Question] Does there exist a run
		$q_0(0,0)$ $\steps$ $q_1(x_1,y_1)$ $\steps$ $\cdots$ $\steps$ $q_m(x_m,y_m)$
		that visits a final state between $q_s(x_s,y_s)$ and
		$q_t(x_t,y_t)$ and satisfies
		$\psi(x_1,y_1,\ldots,x_m,y_m)$?
	\end{description}
	\begin{lemma}[{\cite[Prop.~6.5]{BaumannMeyerZetzsche2022a}}] \label{constrained-runs-PSPACE}
		The constrained runs problem for succinct $2$-VASS is in $\PSPACE$.
	\end{lemma}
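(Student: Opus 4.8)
The plan is to extend the \PSPACE{} decision procedure for plain reachability in succinct $2$-VASS of Blondin et al.~\cite{DBLP:journals/jacm/BlondinEFGHLMT21} so that it additionally enforces the prescribed waypoints $q_1,\ldots,q_m$, the final-state window, and the constraint $\psi$. First I would dispose of the window condition by a product construction: augment $\cV$ with a single bit in the control state recording whether a final state has been visited, and replace the requirement ``visits a final state between positions $s$ and $t$'' by inserting one additional waypoint, namely the state in which this bit is flipped, into the sequence at a position lying between $q_s$ and $q_t$. After this reduction the task is purely a \emph{waypoint reachability with arithmetic constraint}: decide whether there exist counter values $(x_1,y_1),\ldots,(x_m,y_m)$ with $q_{i-1}(x_{i-1},y_{i-1}) \steps q_i(x_i,y_i)$ in classical semantics for every $i$, and with $\psi(x_1,y_1,\ldots,x_m,y_m)$ satisfied.

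The core tool is the linear path scheme (LPS) characterization of $2$-VASS reachability: between any two fixed states, every reachable pair of configurations is witnessed by a run following an LPS with only polynomially many loops, so the classical reachability relation restricted to one segment is definable by an existential Presburger formula $R_i(x_{i-1},y_{i-1},x_i,y_i)$ whose matrix encodes the (polynomially many) loop iteration counts, the connecting-path effects, and the nonnegativity conditions along the run. I would then form a single existential Presburger sentence $\Phi$ asserting the existence of the endpoint values $(x_i,y_i)$ together with all internal loop-iteration witnesses such that $\bigwedge_{i=1}^m R_i \wedge \psi$ holds; here consecutive segments share the endpoint variables, and $\psi$ binds all of them. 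Since $\psi$ is quantifier-free and each $R_i$ is existential, $\Phi$ is again an existential Presburger sentence, and a constrained run of the required shape exists if and only if $\Phi$ is satisfiable.

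The remaining, and main, obstacle is that $\Phi$ cannot simply be discharged in \NP. In the succinct setting the counter updates are exponential and the witnessing iteration counts, and hence the values $x_i,y_i$, can be doubly exponential, so the natural polynomial-size certificate is unavailable; moreover the connecting paths inside an LPS can be exponentially long, so certifying that both counters stay nonnegative throughout a segment requires controlling the minimum counter value along an exponential-length path. This is exactly the difficulty already resolved for plain reachability by Blondin et al.: guess the polynomially many loops of each LPS, and then decide the induced system of linear (in)equalities over integer variables of exponential magnitude in polynomial space rather than via an explicit witness. I would reuse that machinery verbatim for the subformulas $R_i$ and then check that conjoining the atoms coming from $\psi$ preserves the same complexity class. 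This holds because $\psi$ contributes only a Boolean combination of linear (in)equalities over the already-present endpoint variables, with binary coefficients bounded by the input, introducing neither quantifier alternation nor additional witness blow-up; evaluating such atoms on doubly-exponential values stays within \PSPACE. Assembling the window reduction, the waypoint decomposition, the per-segment LPS formulas, and the \PSPACE{} evaluation of the combined arithmetic then yields the claimed bound.
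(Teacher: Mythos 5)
This lemma is not proved in the paper at all: it is imported as a black box from \cite[Prop.~6.5]{BaumannMeyerZetzsche2022a}, so there is no internal proof to compare your attempt against. What can be said is that your overall architecture --- decompose the run at the waypoints, apply the linear-path-scheme (LPS) flattening of succinct $2$-VASS reachability from \cite{DBLP:journals/jacm/BlondinEFGHLMT21} to each segment, dispose of the final-state window via a visited-bit product with one extra waypoint, and combine everything with $\psi$ into a single existential arithmetic condition checked by a guess-and-verify procedure --- is exactly the kind of ``slight extension'' of \cite{DBLP:journals/jacm/BlondinEFGHLMT21} that the paper attributes to the cited reference, and the window reduction and per-segment decomposition are fine.

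There is, however, a genuine flaw in how you justify the decisive step. Your claim that ``the witnessing iteration counts, and hence the values $x_i,y_i$, can be doubly exponential, so the natural polynomial-size certificate is unavailable'' is false, and the patch you build on it --- that ``evaluating such atoms on doubly-exponential values stays within $\PSPACE$'' --- is unsound as stated, since a doubly exponential value has exponentially many bits and cannot even be stored in polynomial space. Neither is needed: once the LPSes for the $m$ segments are fixed, each $R_i$ collapses to polynomially many linear constraints over the loop-iteration and endpoint variables, with coefficients of polynomial bit-size (path effects, loop effects, and minimal prefix effects are sums of at most exponentially many updates, each of magnitude at most exponential); conjoining the atoms of $\psi$ keeps the system of polynomial size, and standard bounds on integer programming then guarantee a solution of polynomial bit-size whenever any solution exists. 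The real obstruction to an $\NP$ certificate is solely the exponential \emph{length} of the LPS paths (which you do mention), and the correct repair is to guess the LPSes transition by transition in nondeterministic polynomial space, accumulating each segment's effect and minimal prefix effect on the fly, and only then solve the residual polynomial-size integer program; since nondeterministic polynomial space equals $\PSPACE$, this closes the argument. As written, your final step --- ``reuse that machinery verbatim \ldots\ and check that conjoining the atoms coming from $\psi$ preserves the same complexity class'' --- asserts rather than proves precisely the content of the lemma, and the supporting claim it rests on is wrong; the plan is salvageable along the lines above, but the proof as given has a gap at its core step.
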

	
	We can now prove \cref{weak-reach-2-VASS-upper} by reducing to the constrained runs problem:
	Let $\cV$ be a $2$-VASS with configurations $s(\vv)$ and $t(\ww)$.
	According to \cref{charact-reachability}, existence of a run $s(\vv) \weaksteps t(\ww)$ is equivalent to
	existence of states $p_1, p_2$ and a set $Z \subseteq [1,2]$ such that a run
	$s(\vv') \steps t(\ww)$ with $\vv' \geq \vv$ that is subject to additional requirements enforced by conditions (2) and (3) of the \cref{charact-reachability}. 
	Our $\PSPACE$ algorithm enumerates all possibilities of $p_1,p_2$ and $Z$,
	constructing an instance of the constrained run problem each time,
	and checking for a constrained run in $\PSPACE$ using \cref{constrained-runs-PSPACE}.
	If such a run exists in at least one of the instances, the algorithm accepts, otherwise it rejects.
	To construct each instance the algorithm first modifies $\cV$ to ensure that a starting configuration
	$s(\vv')$ is reachable for any $\vv' \geq \vv$.
	To this end a new initial state $q_0$ is added, with two loops that increment one of the counters each,
	and a transition that goes to $s$ by adding $\vv$.
	Then the additional requirements of \cref{charact-reachability} are encoded in
	quantifier-free Presburger arithmetic, as required by the constrained run problem.
	Clearly the constructed algorithm runs in $\PSPACE$ and decides $s(\vv) \weaksteps t(\ww)$.
	For more details refer to \cref{proof-reachability-2-VASS-upper}.

	\subparagraph{Zero reachability}
	
	\begin{proposition}\label{charact-zero-reach-2-VASS}
		Monus zero reachability in $2$-VASS is $\PSPACE$-complete under binary encoding and $\NL$-complete under unary encoding.
	\end{proposition}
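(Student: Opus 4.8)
The plan is to lean entirely on \cref{charact-zero-reachability}, which turns monus zero reachability into classical coverability in the reversed VASS, and then to transfer the known complexity of classical $2$-VASS coverability in both directions. The point is that reversal and all the auxiliary constructions are logspace and preserve both the dimension ($=2$) and the encoding (unary vs.\ binary), so the bounds carry over verbatim.

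\textbf{Upper bound.} By \cref{charact-zero-reachability}, deciding $s(\vv)\weaksteps t(\bzero)$ in $\cV$ is equivalent to deciding whether $t(\bzero)\steps_{\rev{\cV}}s(\vv')$ for some $\vv'\ge\vv$, i.e.\ whether $s(\vv)$ is coverable from $t(\bzero)$ in $\rev{\cV}$. Reversing each transition $(q,-\zz,p)\mapsto(p,\zz,q)$ is computable in logspace, leaves the magnitudes of all entries unchanged, and keeps the dimension at $2$, so $\rev{\cV}$ is a $2$-VASS of the same encoding type. Classical coverability in $2$-VASS is in $\PSPACE$ for binary updates and in $\NL$ for unary updates (\cref{table-classical}, \cite{DBLP:journals/jacm/BlondinEFGHLMT21,rosier1986multiparameter}). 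Since $\NL$ and $\PSPACE$ are closed under logspace reductions, I would obtain the matching upper bounds directly.

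\textbf{Lower bounds.} The $\NL$ lower bound (for both encodings) is inherited from state reachability in finite automata, as noted at the start of this section. For the $\PSPACE$ lower bound under binary encoding I would reduce classical coverability in succinct $2$-VASS, which is $\PSPACE$-hard, to monus zero reachability, again through \cref{charact-zero-reachability}. Given a coverability instance ``cover $q(\ww)$ from $p(\uu)$'' in a succinct $2$-VASS $\cV$, I first normalize the source to the zero vector: add a fresh initial state $p_0$ with the single transition $(p_0,\uu,p)$, yielding a VASS $\cV'$ in which $q(\ww)$ is coverable from $p_0(\bzero)$ iff it is coverable from $p(\uu)$ in $\cV$. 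Applying \cref{charact-zero-reachability} to $\rev{\cV'}$ (so that the reversed VASS is $\cV'$ itself) gives that $q(\ww)\weaksteps_{\rev{\cV'}}p_0(\bzero)$ holds iff $p_0(\bzero)\steps_{\cV'}q(\ww')$ for some $\ww'\ge\ww$, i.e.\ iff the original instance is positive. All steps are logspace and preserve binary encoding, so this is a valid reduction establishing $\PSPACE$-hardness.

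\textbf{Main obstacle.} There is no deep difficulty here, since the characterization does the heavy lifting; the proposition is essentially an interreducibility statement with classical $2$-VASS coverability. The only points requiring care are (i) verifying that reversal and the source-normalization step both preserve the encoding and keep the dimension at $2$, so the transferred bounds genuinely apply to succinct, respectively unary, $2$-VASS, and (ii) pinning the source of the classical coverability instances to a concrete configuration ($t(\bzero)$ for the upper bound, $p_0(\bzero)$ for the lower bound), which is exactly what forces the normalization trick in the hardness direction.
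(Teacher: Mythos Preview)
Your proposal is correct and follows essentially the same approach as the paper: both directions rely on \cref{charact-zero-reachability} to establish interreducibility with classical $2$-VASS coverability, and the source-normalization trick (your fresh state $p_0$ with transition $(p_0,\uu,p)$) matches the paper's construction exactly. The only cosmetic difference is that the paper phrases the hardness reduction by passing to $\rev{\cV'}$ directly, whereas you spell out that applying \cref{charact-zero-reachability} to $\rev{\cV'}$ makes the double reversal collapse back to $\cV'$; the content is identical.
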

	\begin{proof}
		This is a simple consequence of monus zero reachability being interreducible with classical coverability: 
Classical coverability in $2$-VASS under binary encoding is $\PSPACE$-complete under binary encoding (in~\cite[Corollary 3.3]{DBLP:journals/jacm/BlondinEFGHLMT21}, this is deduced from~\cite[p.~108]{rosier1986multiparameter} and~\cite[Corollary 10]{DBLP:conf/icalp/FearnleyJ13} and $\NL$-complete under unary encoding~\cite[p.~108]{rosier1986multiparameter}.
		
		Let $\cV$ be a $2$-VASS with configurations $s(\vv)$ and $t(\bzero)$.
		Then according to \cref{charact-reachability}, we know that $t(\bzero)$ is monus
		reachable from $s(\vv)$ if and only if in $\rev{\cV}$ the
		configuration $s(\vv)$ is coverable from $t(\bzero)$ with classical semantics.
		On the other hand, given configurations $s(\vv)$ and $t(\ww)$ of a $2$-VASS $\cV$,
		we add a new state $s'$ and transition $(s',\vv,s)$ to construct the $2$-VASS $\cV'$.
		Then classical coverability of $t(\ww)$ from $s(\vv)$ in $\cV$ is equivalent
		to the same from $s'(\bzero)$ in $\cV'$.
		Now applying \cref{charact-reachability} in reverse, the latter is further equivalent to
		monus reachability of $s'(\bzero)$ from $t(\ww)$ in $\rev{\cV'}$.
	\end{proof}
	
	\subparagraph{Coverability}
	By \cref{weak-coverability}, monus coverability is in $\NP$ in arbitrary dimension.
	Thus, it remains to show the $\NP$ lower bound.
	
	\begin{figure}
		\begin{center}
			\begin{tikzpicture}[initial text={}]
				\node[state,initial] (q0) {$s$};
				\node[state,right=1.6cm of q0] (q1) {};
				\node[state,right=1.2cm of q1] (q2) {};
				\node[state,right=1.2cm of q2] (q3) {};
				\node[state,right=1.2cm of q3] (q4) {};
				\node[state,right=1.2cm of q4,accepting] (q5) {$t$};
				\node at ($(q2.center)!0.5!(q3.center)$) (d) {$\cdots$};
				\path[->] 
				(q0) edge node[above] {$(1,a+1)$} (q1)
				(q1) edge[bend left] node[above] {$(a_1,-a_1)$} (q2)
				(q1) edge[bend right] node[below] {$(0,0)$} (q2)
				(q2) edge[bend left]  (d)
				(q2) edge[bend right] (d)
				(d) edge[bend left]  (q3)
				(d) edge[bend right] (q3)
				(q3) edge[bend left] node[above] {$(a_n,-a_n)$} (q4)
				(q3) edge[bend right] node[below] {$(0,0)$} (q4)
				(q4) edge node[above] {$(-a,0)$} (q5);
			\end{tikzpicture}
		\end{center}
		\caption{2-VASS to show $\NP$-hardness of coverability in dimension two.}\label{two-dim-np-hard-fig}
	\end{figure}
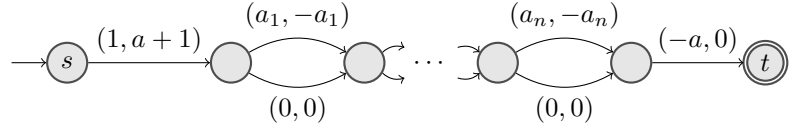
	
	\begin{proposition}\label{two-dim-np-hard-prop}
		Monus coverability in succinct $2$-VASS is $\NP$-hard.
	\end{proposition}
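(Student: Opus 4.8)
The plan is to reduce from \textsc{Subset Sum}, which is $\NP$-complete even with binary-encoded inputs. Given an instance $a_1,\ldots,a_n,a\in\N$ (all values positive), I build the succinct $2$-VASS $\cV$ depicted in \cref{two-dim-np-hard-fig}, take $s(0,0)$ as the initial configuration, and ask whether $t(1,1)$ is coverable in monus semantics, i.e.\ whether $s(0,0)\weaksteps t(\ww')$ for some $\ww'\ge(1,1)$. The offsets $+1$ in the transition $(1,a+1)$ are chosen precisely so that a correct subset sum lands exactly on the minimal coverable value $1$ in each coordinate.

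First I would observe that runs correspond bijectively to subsets. Every run from $s$ to $t$ must use the unique first transition $(1,a+1)$ and the unique last transition $(-a,0)$, and at each of the $n$ choice gadgets it picks either $(a_i,-a_i)$ or $(0,0)$. Thus runs are in bijection with subsets $S\subseteq\{1,\ldots,n\}$, where $(a_i,-a_i)$ is taken iff $i\in S$. Write $\Sigma:=\sum_{i\in S}a_i$.

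Next I would track both counters under monus semantics. Counter $1$ starts at $1$ after the initial transition and only increases through the gadgets (each $a_i\ge 0$), so it stays positive and is never floored until the final transition subtracts $a$; its value at $t$ is $\max(0,\,1+\Sigma-a)$. Counter $2$ starts at $a+1$ and is non-increasing, since it is only decremented by the chosen $a_i$ and is untouched by the final transition, so the monus flooring yields the value $\max(0,\,a+1-\Sigma)$ at $t$. Covering $t(1,1)$ then requires both $1+\Sigma-a\ge 1$ and $a+1-\Sigma\ge 1$, i.e.\ $\Sigma\ge a$ and $\Sigma\le a$, which holds iff $\Sigma=a$; in that case the run in fact reaches exactly $t(1,1)$.

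The crux, and the only place where monus semantics does the work, is that flooring negative values to $0$ makes each coordinate encode a one-sided inequality ($\Sigma\le a$ on counter $2$, $\Sigma\ge a$ on counter $1$), so that coverability of $t(1,1)$ captures exactly the equality $\Sigma=a$. The main obstacle I expect is getting the monotonicity argument right to justify the closed forms $\max(0,1+\Sigma-a)$ and $\max(0,a+1-\Sigma)$: one must verify that counter $2$ only ever decreases (so that intermediate flooring never raises the final value) and that counter $1$ is floored only at the last step. Finally, since the $a_i$ and $a$ are written in binary, $\cV$ is succinct and of polynomial size, so the reduction is polynomial-time, which establishes $\NP$-hardness.
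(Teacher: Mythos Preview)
Your proposal is correct and takes essentially the same approach as the paper's proof: both reduce from \textsc{Subset Sum} via the construction in \cref{two-dim-np-hard-fig}, arguing that covering $1$ in counter~$1$ forces $\Sigma\ge a$ while covering $1$ in counter~$2$ forces $\Sigma\le a$. Your write-up is in fact more careful than the paper's terse justification, correctly isolating the monotonicity of each counter to derive the closed forms $\max(0,1+\Sigma-a)$ and $\max(0,a+1-\Sigma)$ under monus semantics.
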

	\begin{proof}
		We reduce from the \emph{subset sum} problem, which is well-known to be
		$\NP$-hard. Here, we are given binary encoded numbers $a_1,\ldots,a_n,a \in \N$ and
		are asked whether there is a vector $(x_1,\ldots,x_n)\in\{0,1\}^n$ such that
		$x_1a_1+\cdots+x_na_n=a$. Given such an instance, we construct the $2$-VASS in
		\cref{two-dim-np-hard-fig}. It is clear that we can cover $t(1,1)$ from $s(0,0)$ iff the
		subset-sum instance is positive: Covering $1$ in the first counter means our
		sum is at least $a$, whereas covering $1$ in the second counter means our sum is
		at most $a$.
	\end{proof}
	
	\begin{proposition}
		Monus coverability in unary-encoded $2$-VASS is in $\NL$.
	\end{proposition}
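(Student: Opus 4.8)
The plan is to reuse the reduction behind the $\NP$ upper bound of \cref{weak-coverability}, specialised to fixed dimension and unary updates, so that its final $\Z$-reachability test can be carried out in $\NL$ instead of $\NP$. Recall from the proof of \cref{weak-coverability} (which rests on \cref{charact-coverability}) that $s(\uu)$ covers $t(\vv)$ in monus semantics if and only if there is a permutation $\sigma$ of $\{1,\ldots,d\}$ with $s_0(\mathsf{copy}_0(\uu)) \zsteps t'(\mathsf{copy}_d(\vv))$ in the $2d$-dimensional VASS $\cV'_\sigma$ of \cref{coverability-to-integer}. For $d=2$ this reduction becomes cheap in two respects: there are only two permutations $\sigma$ to consider, and each $\cV'_\sigma$ is a VASS of the fixed dimension $2d=4$ whose updates are unit vectors, vectors of the form $\ee_{\sigma(j)}-\ee_{d+j}$, or copies of the updates of $\cV$; hence $\cV'_\sigma$ is unary whenever $\cV$ is, and it is computable from $\cV$ in logarithmic space. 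Consequently, it suffices to establish the following ingredient: \emph{$\Z$-reachability in VASS of any fixed dimension with unary updates is in $\NL$}. Granting this, the algorithm guesses one of the two permutations $\sigma$, builds $\cV'_\sigma$, and invokes the $\NL$ procedure for $\Z$-reachability on it; the whole computation stays in $\NL$.

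To prove the ingredient I would show that a reachable $\Z$-target always has a \emph{polynomially bounded} witness, after which an $\NL$ machine can simulate a run on the fly. Let $N$ be the unary input size, so that all updates and the target difference $\ww-\vv$ have entries bounded by $N$. If $s(\vv)\zsteps t(\ww)$, decompose a witnessing walk into one simple $s$-$t$ path together with simple cycles attached to it, taken with nonnegative multiplicities; since the walk's effect is governed by the $d$ (fixed) effect equations, standard size bounds for solutions of integer linear systems with a fixed number of equations and coefficients bounded by $N$ yield a witness using only polynomially many cycle iterations. Along such a walk of polynomial length all integer counter values stay bounded in absolute value by some fixed polynomial $B=B(N)$. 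An $\NL$ machine then guesses the walk transition by transition, maintaining only the current state, the $d$ current counter values, and a step counter, aborting a branch once a counter exceeds $B$ or the polynomial step bound is surpassed, and accepting upon reaching $t$ with the prescribed values. As $d$ is fixed and every stored quantity is polynomially bounded in $N$, this uses logarithmic space, and correctness is immediate from the existence of the bounded witness.

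I expect the main obstacle to be the witness-size bound for $\Z$-reachability, and in particular the interplay between the effect constraints and the connectivity of the chosen cycles: one must argue both that a bounded number of simple cycles (depending only on the fixed dimension) already suffices to realise the required effect and that the resulting path-plus-cycles multiset, being connected and degree-balanced, can be reassembled into a single $s$-$t$ walk by an Eulerian-trail argument for directed multigraphs. Once the bounded witness is in place, the logspace simulation and the reduction from \cref{weak-coverability} are routine. Together with the $\NL$ lower bound inherited from state reachability in finite automata (noted at the start of \cref{sec:two_dim}), this gives $\NL$-completeness of monus coverability in unary-encoded $2$-VASS.
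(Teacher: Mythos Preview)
Your approach is essentially the same as the paper's: both reduce, via the construction of \cref{weak-coverability}, to $\Z$-reachability in a $4$-VASS (one for each of the two permutations of $\{1,2\}$), and then invoke an $\NL$ bound for $\Z$-reachability in fixed dimension. The only difference is that the paper simply cites this last ingredient as a known result (Gurari and Ibarra~\cite{DBLP:journals/jcss/GurariI81}), whereas you sketch a direct proof via polynomial witness bounds; your sketch is along the right lines, but there is no need to re-derive it.
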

	\begin{proof}
		This follows using the same construction as for \cref{weak-coverability}: Given
		a $2$-VASS, there are only two permutations $\sigma$ of $\{1,2\}$. Thus, we can
		try both permutations $\sigma$ and construct the VASS $\cV'_\sigma$ in
		logspace. Then, $\cV_\sigma$ has dimension $2d$. Thus, we reduce monus
		coverability in $2$-VASS to reachability in $\Z$-semantics in $4$-VASS. Since
		reachability with $\Z$-semantics in each fixed dimension can be decided in
		$\NL$~\cite{DBLP:journals/jcss/GurariI81}, this provides an $\NL$ upper bound.
	\end{proof}
	
	\section{One-dimensional VASS}\label{sec:one_dim}
	
	\subparagraph{Reachability} We begin with the proofs regarding reachability.
	\begin{proposition}\label{monus-reach-1-VASS}
		Monus reachability in $1$-VASS is in $\NL$ under unary encoding and in $\NP$ under binary encoding.
	\end{proposition}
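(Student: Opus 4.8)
The plan is to reduce monus reachability in a $1$-VASS to a bounded number of classical reachability and coverability queries, and then invoke the known complexities of those classical problems in dimension $1$ (\cref{table-classical}). First I would apply \cref{charact-reachability} with $d=1$. Since the only possible choices for the set $Z$ are $Z=\emptyset$ and $Z=\{1\}$, the characterization splits into exactly two cases, and $s(\vv)\weaksteps t(\ww)$ holds iff at least one of them is realizable.

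In the case $Z=\emptyset$, condition~(3) of \cref{charact-reachability} forces $\vv'=\vv$, so this case holds iff there is a classical run $s(\vv)\steps t(\ww)$, i.e.\ it is precisely classical reachability. In the case $Z=\{1\}$, we must find $\vv'\ge\vv$ together with a classical run $s(\vv')\steps t(\ww)$ in which the single counter hits $0$. The key observation is that such a run can be split at a configuration $p(0)$ where the counter equals $0$, decomposing it as $s(\vv')\steps p(0)\steps t(\ww)$. The second segment is simply a classical reachability query $p(0)\steps t(\ww)$. For the first segment, since $\vv'$ ranges over all values $\ge\vv$, the existence of some $\vv'\ge\vv$ with $s(\vv')\steps p(0)$ is, by reversing the run, equivalent to classical coverability of $s(\vv)$ from $p(0)$ in $\rev{\cV}$: a classical run $p(0)\steps s(\vv')$ in $\rev{\cV}$ with $\vv'\ge\vv$ corresponds exactly, step by step, to a covering run. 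Thus the case $Z=\{1\}$ holds iff there is a state $p$ with both (a)~$s(\vv)$ coverable from $p(0)$ in $\rev{\cV}$ and (b)~$t(\ww)$ reachable from $p(0)$ in $\cV$, both being classical queries.

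Putting this together yields the algorithm: nondeterministically guess which case applies and, in the second case, guess the intermediate state $p$; then answer using the classical reachability and coverability subroutines. For the complexity, I would use that in $1$-VASS classical reachability is $\NL$-complete under unary and $\NP$-complete under binary encoding, while classical coverability is in $\NL$ under unary and in $\NC^2\subseteq\compP$ under binary encoding. Under unary encoding, guessing $p$ and conjoining two $\NL$ queries stays in $\NL$ (as $\NL$ absorbs the nondeterministic guessing, and a conjunction of $\NL$ properties is decided by running the two accepting computations in sequence), which gives the $\NL$ bound. Under binary encoding, the reachability queries are in $\NP$ and the coverability query is in $\compP$, so guessing the case and the state $p$ and verifying the conjunction stays in $\NP$.

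The main obstacle — and the only genuinely nontrivial step — is recognizing that the quantifier ``$\vv'\ge\vv$'' appearing in the $Z=\{1\}$ case, which a priori ranges over unboundedly many starting values, can be eliminated by passing to coverability in the reversed VASS $\rev{\cV}$; once this reformulation is in place, everything else is routine bookkeeping and an appeal to the known classical complexities in dimension $1$.
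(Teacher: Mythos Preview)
Your proof is correct and arrives at the same two-case characterization as the paper (their \cref{charact-reach-1-VASS}), but you then take a genuinely different route to decide case~(ii). The paper constructs a single one-counter automaton with zero tests: it nondeterministically pumps the counter above $\vv$, simulates $\cV$, and uses a zero test to set a flag recording that $0$ was hit; it then invokes the known $\NL$/$\NP$ bounds for OCA reachability. You instead split the run at the zero-hit point and observe that the prefix, with its unbounded starting value, is exactly a coverability query in $\rev{\cV}$, while the suffix is a classical reachability query; you then combine the two via a nondeterministic guess of the intermediate state. Your approach has the advantage of staying entirely within ordinary $1$-VASS and reusing the reverse-coverability trick the paper already deploys elsewhere (\cref{charact-zero-reachability,charact-cover-1-VASS}); the paper's approach has the advantage of producing a single uniform reduction rather than a disjunction of subproblems. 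Both are equally valid and of comparable difficulty.
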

	The proof of \cref{monus-reach-1-VASS} relies on the following simple consequence of \cref{charact-reachability}:
	\begin{lemma}\label{charact-reach-1-VASS}
		Let $\cV$ be a $1$-VASS. %
		Then $s(m) \mathrel{{\weaksteps}{}_{\cV}} t(n)$ if and only if (i)~$s(m)\steps_{\cV} t(n)$ or (ii)~  there exist a state $q$ and number
		$m'\ge m$ with $s(m')\steps_{\cV} q(0)$ and $q(0)\steps_{\cV} t(n)$.
		\end{lemma}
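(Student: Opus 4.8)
The plan is to derive this directly from \cref{charact-reachability} by specializing it to dimension $d=1$, where the subset $Z\subseteq\{1,\ldots,d\}$ can only be $\emptyset$ or $\{1\}$. These two choices of $Z$ will correspond exactly to the two alternatives~(i) and~(ii) in the statement, so essentially no new argument is needed beyond bookkeeping.

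For the forward direction, suppose $s(m)\weaksteps t(n)$, witnessed by some path $\rho$ with $s(m)\xweaksteps{\rho} t(n)$. Applying \cref{charact-reachability} yields a set $Z\subseteq\{1\}$ and a value $m'\ge m$ with $s(m')\xsteps{\rho} t(n)$ satisfying its three conditions. If $Z=\emptyset$, then condition~\ref{charact-reachability:initial} forces $m'=m$, so $s(m)\steps t(n)$ and we are in case~(i). If $Z=\{1\}$, then condition~\ref{charact-reachability:add-minimum} says that the counter hits $0$ somewhere along the classical run $s(m')\xsteps{\rho} t(n)$; factoring $\rho=\rho_1\rho_2$ at a position where the configuration has the form $q(0)$ gives $s(m')\steps q(0)$ and $q(0)\steps t(n)$, which is exactly case~(ii).

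For the backward direction, case~(i) is immediate from \cref{rem:vass-approxim}: a classical run is always a monus run on the same path, so $s(m)\steps t(n)$ implies $s(m)\weaksteps t(n)$. For case~(ii), I would concatenate the two classical runs $s(m')\steps q(0)$ and $q(0)\steps t(n)$ into a single classical run $s(m')\xsteps{\rho} t(n)$ along the concatenated path $\rho$, which passes through $q(0)$ and hence hits $0$. Invoking \cref{charact-reachability} with $Z=\{1\}$ and this $\vv'=m'\ge m$ then yields $s(m)\weaksteps t(n)$, as desired.

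The only point requiring care, and it is minor, is the passage between the two phrasings ``the counter hits $0$ during the classical run'' (as used in \cref{charact-reachability}) and ``the run factors through a configuration $q(0)$'' (as used in the statement). In dimension~$1$ these coincide literally: the single counter hitting $0$ at step $i$ means the $i$-th configuration is of the form $q(0)$, so splitting the path at that index establishes the equivalence in both directions. I do not expect any genuine obstacle here, since all the substantive work is already carried by \cref{charact-reachability}.
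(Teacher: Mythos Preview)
Your proposal is correct and matches the paper exactly: the paper presents \cref{charact-reach-1-VASS} only as a ``simple consequence of \cref{charact-reachability}'' without spelling out the argument, and your case split on $Z\in\{\emptyset,\{1\}\}$ together with factoring the classical run at a zero-hitting configuration is precisely how that specialization goes through. The only microscopic wrinkle---that the paper's definition of ``hits $0$'' excludes the initial index---is harmless here, since if the first segment $s(m')\steps q(0)$ were empty one would have $m'=0=m$ and case~(i) applies directly.
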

	
	For \cref{monus-reach-1-VASS}, we reduce to reachability in one-counter
	automata.  A \emph{one-counter automaton (OCA)} is a $1$-VASS with zero-tests, i.e.
	special transitions that test the counter for zero instead of adding a number.
	For encoding purposes, zero tests take up as much space as a transition adding
	$0$ to the counter.  In our reduction, the update encoding is
	preserved: If the input $1$-VASS has unary encoding, then the OCA has
	unary updates as well. If the input $1$-VASS has binary updates, then
	the OCA will too.  Then, we can use the fact that in OCA with unary updates,
	reachability is in $\NL$~\cite{ValiantP75}
	and for binary updates, it is in
	$\NP$~\cite{DBLP:conf/concur/HaaseKOW09}.
	
	The OCA first guesses whether to simulate a run of type~(i) or of type~(ii) in
	\cref{charact-reach-1-VASS}.  Then for type~(i), it just simulates a classical
	$1$-VASS. For type~(ii), it first non-deterministically increments the counter,
	and then simulates a run of the $1$-VASS. However, on the way, it keeps a flag
	signaling whether the counter has hit $0$ at some point (which it can maintain
	using zero tests). Thus, when simulating runs of type (ii), the OCA only
	accepts if zero has been hit. For a detailed description, refer to \cref{app:monus-reach-1-VASS}.
	
	\begin{figure}
		\begin{center}
			\begin{tikzpicture}[initial text={}]
				\node[state,initial] (q0) {$q_0$};
				\node[state,right=1.2cm of q0] (q1) {$q_1$};
				\node[state,right=1.2cm of q1] (q2) {$q_2$};
				\node[state,right=1.2cm of q2] (q3) {$q_n$};
				\node[state,right=1.2cm of q3] (q4) {$q_{n+1}$};
				\node[state,right=1.2cm of q4,accepting] (q5) {$q_f$};
				\node at ($(q2.center)!0.5!(q3.center)$) (d) {$\cdots$};
				\path[->] 
				(q0) edge node[above] {$1$} (q1)
				(q1) edge[bend left] node[above] {$a_1$} (q2)
				(q1) edge[bend right] node[below] {$0$} (q2)
				(q2) edge[bend left]  (d)
				(q2) edge[bend right] (d)
				(d) edge[bend left]  (q3)
				(d) edge[bend right] (q3)
				(q3) edge[bend left] node[above] {$a_n$} (q4)
				(q3) edge[bend right] node[below] {$0$} (q4)
				(q4) edge node[above] {$-a$} (q5);
			\end{tikzpicture}
		\end{center}
		\caption{$1$-VASS to show $\NP$-hardness of monus reachability in dimension one with binary encoded counter updates.}\label{one-dim-np-hard}
	\end{figure}
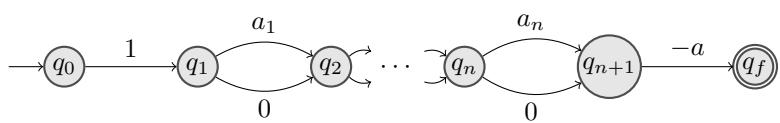
	\begin{proposition}\label{reachability-1-vass-np-hard}
		Monus reachability in $1$-VASS is $\NP$-hard under binary encoding.
	\end{proposition}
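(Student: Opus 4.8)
The plan is to reduce from the \emph{subset sum} problem, mirroring the $2$-VASS coverability reduction of \cref{two-dim-np-hard-prop}, but now exploiting the monus clamping on a single counter. Recall that subset sum asks, given binary encoded $a_1,\ldots,a_n,a\in\N$, whether some $(x_1,\ldots,x_n)\in\{0,1\}^n$ satisfies $x_1a_1+\cdots+x_na_n=a$; this is $\NP$-hard precisely because the numbers are given in binary. Given such an instance I would build the succinct $1$-VASS depicted in \cref{one-dim-np-hard}: a single initial transition adding $1$, then a chain of $n$ ``diamonds'' where the $i$-th diamond offers a choice between adding $a_i$ and adding $0$, and a final transition adding $-a$ into the accepting state $q_f$. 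The reduction is clearly computable in logarithmic space, since the numbers are simply copied onto the edges.

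The statement I would then prove is that $q_0(0)\weaksteps q_f(1)$ holds in monus semantics if and only if the subset sum instance is positive. For the forward direction, observe that any run must traverse the linear chain, and that after the first transition the counter equals $1$ and subsequently only nonnegative numbers $a_i$ (or $0$) are added. Hence the counter stays $\ge 1$ until the last transition, so no clamping can occur before it, and the value reached at $q_f$ is $\max(1+\sum_i x_i a_i-a,0)$ for the chosen $(x_i)$. The crucial point is the choice of the \emph{positive} target value $1$: since monus replaces a would-be negative value by $0$, never by $1$, reaching exactly $1$ forces $1+\sum_i x_i a_i-a=1$, i.e.\ $\sum_i x_i a_i=a$. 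Conversely, any solution $(x_i)$ yields a run whose value never goes negative, so it is in fact even a classical run reaching $q_f(1)$.

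The only point requiring care --- and the reason one cannot simply reuse the zero-reachability or coverability arguments --- is precisely this target value. Targeting $q_f(0)$ would make every subset with $\sum_i x_i a_i\le a$ succeed (by clamping), collapsing the instance to a trivially solvable one, while coverability of $q_f(1)$ would only enforce the inequality $\sum_i x_i a_i\ge a$. Demanding \emph{exact} reachability of the positive value $1$ is what pins down the equation $\sum_i x_i a_i=a$, and this single idea is what the whole reduction hinges on. If one prefers, the same conclusion also follows formally from \cref{charact-reach-1-VASS}: the successful run is of type~(i), as the counter never hits $0$, but the direct argument above is more transparent.
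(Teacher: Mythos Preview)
Your proposal is correct and follows essentially the same approach as the paper: the same subset-sum reduction via the $1$-VASS of \cref{one-dim-np-hard}, with the same target configuration $q_f(1)$ and the same key observation that the only possible monus clamping is on the final $-a$ transition, which would yield $0$ rather than $1$. Your added discussion of why the target must be the positive value $1$ (rather than $0$ or a coverability target) is a nice clarification that the paper leaves implicit.
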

	As in \cref{two-dim-np-hard-prop}, we reduce from subset sum.  Given $a_1,\ldots,a_n,a$
	in binary, we construct the $1$-VASS in \cref{one-dim-np-hard}. Then
	$q_0(0)\weaksteps q_f(1)$ iff this is a positive instance. See \cref{proof-reachability-1-vass-np-hard}.

	\subparagraph{Zero reachability and coverability}
	\begin{proposition}
		Monus zero-reachability in $1$-VASS is in $\NL$ under unary encoding
		and in $\NC^2$ under binary encoding.
	\end{proposition}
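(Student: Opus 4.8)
The plan is to obtain both upper bounds from a single observation: in dimension one, \cref{charact-zero-reachability} turns monus zero-reachability directly into classical coverability in the reversed VASS. Concretely, I would specialize \cref{charact-zero-reachability} to a $1$-VASS $\cV$ with configurations $s(m)$ and $t(0)$, which gives that $s(m) \mathrel{{\weaksteps}{}_{\cV}} t(0)$ holds if and only if $t(0) \steps_{\rev{\cV}} s(m')$ for some $m' \ge m$. The right-hand side is, by definition, precisely the statement that $s(m)$ is coverable from $t(0)$ under \emph{classical} semantics in the reversed VASS $\rev{\cV}$. Thus monus zero-reachability in $1$-VASS reduces to classical coverability in $1$-VASS.

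The second step is to check that this reduction respects the resource bounds we need. Forming $\rev{\cV}$ only reverses each transition and negates its update vector, so it preserves the counter-update encoding (unary stays unary, binary stays binary) and is computable in logarithmic space (in fact in $\AC^0$). It then remains to invoke the known complexity of classical coverability in $1$-VASS: it is $\NL$-complete under unary encoding~\cite{ValiantP75} and in $\NC^2$ under binary encoding~\cite{DBLP:conf/concur/AlmagorCPS020}. Composing the trivial reversal map with these algorithms yields the claimed bounds: $\NL$ in the unary case, since $\NL$ is closed under logspace many-one reductions, and $\NC^2$ in the binary case, using that $\L \subseteq \NC^2$ and that $\NC^2$ is closed under logspace many-one reductions, so the reversal preprocessing can be absorbed.

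There is essentially no hard obstacle here: the content is entirely carried by the characterization \cref{charact-zero-reachability}, and the remaining work is the routine bookkeeping of verifying that reversal is encoding-preserving and cheap, plus the closure properties of $\NL$ and $\NC^2$ under the reduction. The only point that deserves a sentence of care is the $\NC^2$ case, where one must note that prepending a logspace-computable (indeed $\AC^0$) reversal to an $\NC^2$ procedure keeps the problem in $\NC^2$; this is immediate from $\L \subseteq \NC^2$ together with the closure of $\NC^2$ under logspace reductions, so no genuine difficulty arises.
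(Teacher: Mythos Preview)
Your proposal is correct and follows essentially the same approach as the paper: both reduce monus zero-reachability to classical coverability in $\rev{\cV}$ via \cref{charact-zero-reachability} and then invoke the known $\NL$ and $\NC^2$ bounds for classical $1$-VASS coverability. Your write-up is simply more explicit about the closure properties needed to absorb the reversal preprocessing.
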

	Since monus zero-reachability reduces to classical coverability (\cref{charact-zero-reachability}),
	this follows from existing $1$-VASS results:
	Coverability in $1$-VASS is in $\NL$ under unary encoding \cite{ValiantP75}
	and $\NC^2$ under binary encoding \cite{DBLP:conf/concur/AlmagorCPS020}.
	\begin{proposition}\label{monus-cover-1-VASS}
		Monus coverability in $1$-VASS is in $\NL$ under unary encoding and in $\NC^2$ under binary encoding.
	\end{proposition}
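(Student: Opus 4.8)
The plan is to establish a characterization of monus coverability in $1$-VASS that mirrors \cref{charact-reach-1-VASS}, and then reduce each case to classical $1$-VASS coverability, for which the stated bounds are already known. Concretely, I claim that $s(m)$ covers $t(n)$ in monus semantics if and only if either (i)~$s(m)\steps t(n')$ for some $n'\ge n$ (classical coverability in $\cV$), or (ii)~there is a state $q$ with $s(m)\weaksteps q(0)$ and $q(0)\steps t(n')$ for some $n'\ge n$ (monus zero reachability into $q(0)$, followed by classical coverability from $q(0)$).

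For the easy direction, if (i) holds then the classical run is also a monus run by \cref{rem:vass-approxim}, so $t(n')$ with $n'\ge n$ is monus-coverable; if (ii) holds, I concatenate the monus run $s(m)\weaksteps q(0)$ with the classical (hence monus) run $q(0)\steps t(n')$. For the forward direction, take any monus run $s(m)\weaksteps t(n')$ with $n'\ge n$. If the single counter never hits $0$ along this run, then the run is not lossy, so by \cref{rem:vass-approxim} it is already a classical run and (i) holds. Otherwise, let $q(0)$ be the \emph{last} configuration of value $0$ on the run. The suffix starting at $q(0)$ keeps the counter strictly positive after its first step (any later value $0$ would contradict maximality), so no decrement is ever clamped there; hence the suffix is lossless and is a classical run $q(0)\steps t(n')$, while the prefix witnesses $s(m)\weaksteps q(0)$, giving~(ii). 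The one genuinely delicate point is verifying that the suffix is lossless: clamping to $0$ occurs only when a value would drop below $0$, which cannot happen once the counter is guaranteed to stay positive; I expect this last-zero decomposition to be the crux of the argument, with everything else being routine.

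With the characterization in hand, I reduce both conditions to classical $1$-VASS coverability. Condition (i) is itself classical coverability in $\cV$. In condition (ii), the reachability $q(0)\steps t(n')$ with $n'\ge n$ is classical coverability from $q(0)$ in $\cV$, and the monus zero reachability $s(m)\weaksteps q(0)$ is, by \cref{charact-zero-reachability}, equivalent to classical coverability of $s(m)$ from $q(0)$ in $\rev{\cV}$. Thus deciding monus coverability amounts to checking classical coverability in $\cV$ (for~(i)), and, for each of the polynomially many states $q$, conjunctively checking two classical coverability instances (one in $\rev{\cV}$ from $q(0)$, one in $\cV$ from $q(0)$), taking the disjunction over all $q$. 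Classical coverability in $1$-VASS is in $\NL$ under unary encoding \cite{ValiantP75} and in $\NC^2$ under binary encoding \cite{DBLP:conf/concur/AlmagorCPS020}, while constructing $\rev{\cV}$ and resetting initial configurations is a logspace/$\NC$ transformation.

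Finally, the whole procedure is a disjunction over polynomially many states of a conjunction of a fixed number of such classical-coverability calls (no negations are needed). In the unary case this stays in $\NL$: a nondeterministic logspace machine guesses $q$ and then verifies the two reachability checks sequentially, reusing space. In the binary case it stays in $\NC^2$: running the polynomially many $\NC^2$ subcircuits in parallel keeps polynomial size and $O(\log^2 n)$ depth, and combining their outputs through the poly-fan-in $\text{OR}/\text{AND}$ tree (of $\NC^1$ depth) adds only $O(\log n)$ further depth. This yields the claimed $\NL$ and $\NC^2$ upper bounds.
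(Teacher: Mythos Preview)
Your proposal is correct and follows essentially the same route as the paper: you derive the characterization that is the paper's \cref{charact-cover-1-VASS} (via the last-zero split and \cref{charact-zero-reachability}), and then reduce to a disjunction over states of conjunctions of classical $1$-VASS coverability checks, invoking \cite{ValiantP75} and \cite{DBLP:conf/concur/AlmagorCPS020} for the respective bounds. The only minor deviation is that the paper handles the unary $\NL$ case by a direct reduction of monus coverability to monus reachability (add a count-down loop at a fresh final state) and then appeals to \cref{monus-reach-1-VASS}, whereas you use the characterization uniformly for both encodings; both arguments are equally valid.
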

	The first statement follows from \cref{monus-reach-1-VASS} and the fact that monus coverability reduces to
	monus reachability by simply adding a new final state where we can count down.  For the $\NC^2$ bound, we use the
	following consequence of \cref{charact-zero-reachability} (see
  \cref{proof-charact-cover-1-vass}).
	\begin{lemma}\label{charact-cover-1-VASS}
		Let $\cV$ be a $1$-VASS with configurations $s(m)$ and $t(n)$.
		Then $t(n)$ is monus coverable from $s(m)$ in $\cV$ if and only if
		$t(n)$ is coverable from $s(m)$ in $\cV$ under classical semantics
		or there is a state $q$ of $\cV$ such that
		$t(n)$ is coverable from $q(0)$ in $\cV$ under classical semantics
		and $s(m)$ is coverable from $q(0)$ in $\rev{\cV}$ under classical semantics.
	\end{lemma}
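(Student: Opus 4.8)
The plan is to prove the two directions separately, routing everything through \cref{charact-zero-reachability} (which identifies monus zero-reachability with classical coverability in the reversed VASS) together with the overapproximation facts of \cref{rem:vass-approxim}. The single counter makes the case analysis ``never hits $0$'' versus ``hits $0$'' exhaustive, and these two cases will line up with the two alternatives in the statement.

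For the forward direction, assume $t(n)$ is monus coverable from $s(m)$, and fix a monus run $s(m) \weaksteps t(n'')$ with $n'' \ge n$. First I would ask whether the counter ever hits $0$ along this run. If it never does, then by \cref{rem:vass-approxim} the run is not lossy, so it is also a classical run $s(m) \steps t(n'')$; since $n'' \ge n$ this is classical coverability of $t(n)$ from $s(m)$, the first alternative. If the counter does hit $0$, let $q(0)$ be the \emph{last} configuration on the run whose counter value is $0$, and split the run there as $s(m) \weaksteps q(0)$ followed by $q(0) \weaksteps t(n'')$. The prefix is a monus run ending in $q(0)$, i.e.\ monus zero-reachability of $q(0)$, so \cref{charact-zero-reachability} yields $q(0) \steps_{\rev{\cV}} s(m')$ for some $m' \ge m$, which is exactly classical coverability of $s(m)$ from $q(0)$ in $\rev{\cV}$. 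In the suffix, every configuration other than the initial $q(0)$ has strictly positive counter value by maximality of the split point, so the counter does not hit $0$ in the suffix; by \cref{rem:vass-approxim} the suffix is therefore non-lossy and is in fact a classical run $q(0) \steps t(n'')$. As $n'' \ge n$, this gives classical coverability of $t(n)$ from $q(0)$, completing the second alternative.

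For the converse, I would simply reassemble monus runs. If $t(n)$ is classically coverable from $s(m)$, then the covering classical run is a monus run by the overapproximation direction of \cref{rem:vass-approxim}, giving monus coverability at once. If instead there is a state $q$ with $t(n)$ classically coverable from $q(0)$ and $s(m)$ classically coverable from $q(0)$ in $\rev{\cV}$, then \cref{charact-zero-reachability} turns the latter into a monus run $s(m) \weaksteps q(0)$, while the former becomes (via overapproximation) a monus run $q(0) \weaksteps t(n'')$ with $n'' \ge n$; concatenating them produces $s(m) \weaksteps t(n'')$, i.e.\ monus coverability of $t(n)$ from $s(m)$.

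The main obstacle is getting the split point right in the forward direction. Cutting at the \emph{last} zero is what guarantees that no configuration in the interior of the suffix equals $0$, which is precisely the hypothesis \cref{rem:vass-approxim} requires to certify a monus run as non-lossy (loss in monus semantics arises only from a decrement capped at $0$, which forces the target value to be $0$). This is the only genuinely nontrivial step; the remaining work is bookkeeping, with \cref{charact-zero-reachability} transporting the prefix into $\rev{\cV}$ and the overapproximation part of \cref{rem:vass-approxim} transporting classical covering runs back into monus runs for the converse.
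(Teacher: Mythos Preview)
Your proposal is correct and follows essentially the same approach as the paper's proof: split at the last zero-configuration, invoke \cref{charact-zero-reachability} for the prefix, and argue the suffix is a classical run (the paper does this more tersely, without explicitly citing \cref{rem:vass-approxim}); the converse is also handled identically by concatenation. Your additional care in justifying why the last-zero split makes the suffix non-lossy is a welcome elaboration, but there is no methodological difference.
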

	\begin{proof}[Proof of \cref{monus-cover-1-VASS}]
		It remains to prove the $\NC^2$ upper bound, for which we check the requirements of \cref{charact-cover-1-VASS}.
		Let $k$ be the number of states of the input $1$-VASS.
		Observe that \cref{charact-cover-1-VASS} yields a logical disjunction over $k+1$ disjuncts,
		where one disjunct consists of a single coverability check
		and the remaining $k$ each consist of a logical conjunction over two coverability checks.
		Classical coverability of binary encoded $1$-VASS is in $\NC^2$ \cite{DBLP:conf/concur/AlmagorCPS020},
		and by the definition of this complexity class, we can combine $2k + 1$ such checks according to
		the aforementioned logical relationship and still yield an $\NC^2$-algorithm.
		Note that this is only possible because $k$ is linear in the size of the input.
	\end{proof}
	
	\bibliographystyle{plain}
	\bibliography{ref}
	
	\appendix
\section{Formal Proofs for Arbitrary Dimensions}
\subsection{Reachability in Arbitrary Dimensions}

In this section, we fix a VASS $\cV = (Q, d, \Delta)$, a configuration $s_0(\vv_0) \in Q \times \N^d$, a valid sequence of transitions $\rho = (p_0, \zz_0, p_1) \ldots (p_{k-1}, \zz_{k-1}, p_k)$, and the number $\mm = \mi(\rho, s_0, \vv_0)$. \crefrange{fig:int-weak}{fig:classw} describe the relations between runs of each of the considered semantics.

\begin{figure}
	
\begin{tikzpicture}[scale=1]

	\draw[->] (-0.5,0) -- (8.5,0) node[below right] {$x$};
	\draw[->] (0,-0.5) -- (0,4.5) node[above left] {$y$};

	\draw[blue, thick, dotted, domain=0:1] plot (\x, \x);
 \draw[blue, thick, dotted, domain=1:2.5] plot (\x, -\x +2);
 \draw[blue, thick, dotted, domain=2.5:3.5] plot (\x, \x-3);
  \draw[blue, thick, dotted, domain=3.5:5] plot (\x, -2*\x+7.5);
  \draw[blue, thick, dotted, domain=5:7.5] plot (\x, 3*\x-17.5);
\draw[<->] (2.5,-0.5) -- (2.5,0)node[below right]{$m_1$};
\draw[<->] (3.5,0.5) -- (3.5,1)node[below right]{$m_1$};
\draw[<->] (5,-2.5) -- (5,0)node[below right]{$m_2$};
\draw[<->] (7.5,5) -- (7.5,7.5)node[below right]{$m_2$};
	\draw[red, thick, dashed, domain=0:1] plot (\x, \x);
 \draw[red, thick, dashed, domain=1:2] plot (\x, -\x +2);
 \draw[red, thick,dashed, domain=2:2.5] plot (\x, 0);
 \draw[red, thick,dashed, domain=2.5:3.5] plot (\x, \x-2.5);
   \draw[red, thick,dashed, domain=3.5:4] plot (\x, -2*\x+8);
  \draw[red, thick,dashed, domain=4:5] plot (\x, 0);
  \draw[red, thick,dashed, domain=5:7.5] plot (\x, 3*\x-15);

\end{tikzpicture}
\caption{Figure depicting relation between runs in monus and $\Z$ semantics} 
    \label{fig:int-weak}
\end{figure}
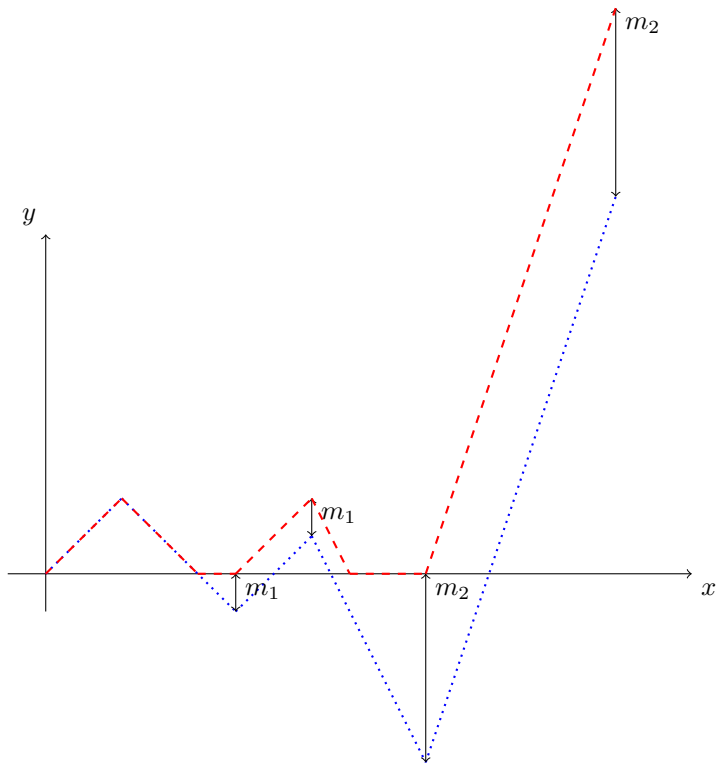
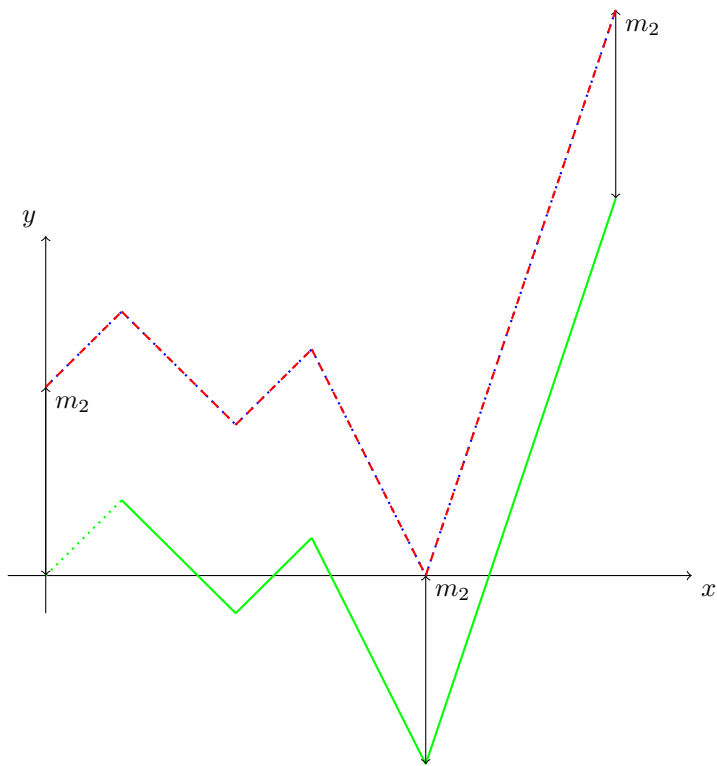
\begin{figure}
   
\begin{tikzpicture}[scale=1]

	\draw[->] (-0.5,0) -- (8.5,0) node[below right] {$x$};
	\draw[->] (0,-0.5) -- (0,4.5) node[above left] {$y$};

	\draw[green, thick, dotted, domain=0:1] plot (\x, \x);
 \draw[green, thick, domain=1:2.5] plot (\x, -\x +2);
 \draw[green, thick, domain=2.5:3.5] plot (\x, \x-3);
  \draw[green, thick, domain=3.5:5] plot (\x, -2*\x+7.5);
  \draw[green, thick, domain=5:7.5] plot (\x, 3*\x-17.5);

	\draw[blue, thick, dotted, domain=0:1] plot (\x, \x+2.5);
 \draw[blue, thick, dotted, domain=1:2.5] plot (\x, -\x +2+2.5);
 \draw[blue, thick, dotted, domain=2.5:3.5] plot (\x, \x-3+2.5);
  \draw[blue, thick, dotted, domain=3.5:5] plot (\x, -2*\x+7.5+2.5);
  \draw[blue, thick, dotted, domain=5:7.5] plot (\x, 3*\x-17.5+2.5);
\draw[<->] (0,0) -- (0,2.5)node[below right]{$m_2$};
\draw[<->] (5,-2.5) -- (5,0)node[below right]{$m_2$};
\draw[<->] (7.5,5) -- (7.5,7.5)node[below right]{$m_2$};
	\draw[red, thick, dashed, domain=0:1] plot (\x, \x+2.5);
 \draw[red, thick, dashed, domain=1:2.5] plot (\x, -\x +2+2.5);
 \draw[red, thick,dashed, domain=2.5:3.5] plot (\x, \x-3+2.5);
   \draw[red, thick,dashed, domain=3.5:5] plot (\x, -2*\x+7.5+2.5);

  \draw[red, thick,dashed, domain=5:7.5] plot (\x, 3*\x-15);

\end{tikzpicture}
	
    \caption{Figure depicting relation between monus and classical runs by adding $|\mm|$ to $\Z$-VASS.}
    \label{fig:classw}
\end{figure}

\subsubsection{Proof of \cref{prop:weak-integer}: Relating $\weak$ and $\Z$ runs}
\label{app:weak-integer}
\textbf{Statement.}
Consider the unique runs induced by $\rho$ from $s_0(\vv_0)$ in $\Z$-semantics
\[
s_0(\vv_0), \ldots, s_{k-1}(\vv_{k-1}), s_k(\vv_k),
\]
and in monus semantics
\[
s_0(\vv'_0), \ldots, s_{k-1}(\vv'_{k-1}), s_k(\vv'_k).
\]
where $\vv'_0 = \vv_0$. Then $\vv'_k = \vv_k - \mm$.
\begin{proof}
  We prove the above by applying induction on length of the path i.e. $|\rho| = k$.
  For $|\rho| = k = 0$, the above proposition is trivially true. Assume that the proposition holds for any path of length $n$. Let $k = n+1$. Let $\rho = \rho_1 \rho_2$ such that $|\rho_1| = n$.
  Let $\mm_n = \mi(\rho_1, s_0, \vv_0)$. Let $j$ be any integer in $\{1,2,\ldots,n\}$.
  By induction hypothesis, $\vv'_n[j] = \vv_n[j] - \mm_n[j]$ (i). By $\Z$ semantics and $\weak$ semantics $\vv[j]_{n+1} = \vv[j]_n + \zz[j]_{n}$ and $\vv_{n+1}'[j] = \max (\vv'_{n}[j] + \zz_n[j], 0)$, respectively (ii). By (i) and (ii) we have $\vv_{n+1}'[j] = \max (\vv_{n}[j] + \zz[j] - \mm_n[j], 0) = \max(\vv_{n+1}[j] - \mm_n[j], 0)$ (iii). There are two possible cases. 
  \\Case 1: $\vv_{n+1}[j] \ge \mm_n[j]$ (iv). Then $\mm[j] = \mm_n[j]$ (by definition of $\mi$) (v). Thus we have $\vv_{n+1}'[j] \overset{\text{(iii)}}{=} \max(\vv_{n+1}[j] - \mm_n[j], 0) \overset{\text{(iv)}}{=} \vv_{n+1}[j] - \mm_n[j] \overset{\text{(v)}}{=} \vv_{n+1}[j] - \mm[j]$. 
  \\Case 2: $\vv_{n+1}[j] < \mm_n[j]$ (vi). Then $\mm[j] = \vv_{n+1}[j]$ (by definition of $\mi$) (vii). Thus we have $\vv_{n+1}'[j] \overset{\text{(iii)}}{=} \max(\vv_{n+1}[j] - \mm_n[j], 0) \overset{\text{(vi)}}{=} 0 \overset{\text{(vii)}}{=} \vv_{n+1}[j] - \mm[j]$.
  \\Hence, $\vv_{n+1}' = \vv_{n+1} - \mm$.
\end{proof}

\subsubsection{Proof of \cref{prop:weak-class}: Relating $\weak$ and Classical runs}
\label{app:weak-class}
\textbf{Statement}- 
Consider the following unique run corresponding to the path $\rho$ from $s_0(\vv_0)$ in the monus semantics
\[
s_0(\vv_0), \ldots, s_{k-1}(\vv_{k-1}), s_k(\vv_k).
\] Then the following run, induced by $\rho$, exists in the classical semantics
\[
s_0(\vv'_0), \ldots, s_{k-1}(\vv'_{k-1}), s_k(\vv'_k).
\]
where $\vv'_0 = \vv_0 - \mm$ and $\vv'_k = \vv_k$.
\begin{proof}
By \cref{prop:weak-integer}, $s_0(\vv_0), \ldots, s_k(\vv_k)$ is a run corresponding to $\rho$ in monus semantics iff $s_0(\vv''_0) \ldots s_k(\vv''_k)$ is the run corresponding to $\rho$ in $\Z$ semantics where $\vv_0'' = \vv_0$ and $\vv''_k = \vv_k + \mm$. By \cref{rem:zvass}, $R = s_0(\vv_0''-\mm), s_1(\vv''_1-\mm)\ldots s_k(\vv''_k-\mm)$ is a valid $\Z$ run on the same path $\rho$. Notice that, for any counter $1 \le j \le d$, $\mm[j]$ is 0 iff $j$ never goes negative in $s_0(\vv''_0) \ldots s_k(\vv''_k)$. Otherwise, $\mm[j]$ is the lowest number that the counter $j$ reaches in the run $s_0(\vv''_0) \ldots s_k(\vv''_k)$. Hence, all the vectors, $\vv_0'' - \mm, \ldots \vv_k'' - \mm$ are in $\N^d$. Therefore none of the coordinates in the run $R$ goes negative. By \cref{rem:vass}, $R = s_0(\vv_0''-\mm), s_1(\vv''_1-\mm)\ldots s_k(\vv''_k-\mm)$ is a valid classical run corresponding to $\rho$. Finally, $\vv_0'' - \mm = \vv_0 - \mm$ and $\vv''_k - \mm = \vv_k$. This concludes the proof.
\end{proof}

\subsubsection{Proof of \cref{charact-reachability}: Characterizing $\weak$ Reachability}
\label{app:charact-reachability}
\textbf{Statement}- 
Let $\cV=(d,Q,\Delta)$ be a VASS, let $s(\vv)$ and $t(\ww)$ be
configurations of $\cV$, and let $\rho$ be a path of $\cV$. Then, $s(\vv) \xweaksteps{\rho} t(\ww)$ if and only
if there is a subset $Z\subseteq\{1,\ldots, d\}$ and a vector $\vv' \ge \vv$ such that
\begin{enumerate}
  \item $s(\vv') \xsteps{\rho} t(\ww)$,
  \item For every $z\in Z$, the coordinate $z$ hits $0$ in $s(\vv') \xsteps{\rho} t(\ww)$,
  \item For every $j\in\{1,\ldots,d\}\setminus Z$, we have $\vv'[j]=\vv[j]$.
\end{enumerate}
\begin{proof}
Let $\mm = \mi(\rho, p, \vv)$. Intuitively, the $(\implies)$-direction is implied by \cref{prop:weak-class} along with the following argument. Any counter $j \in \{1,\ldots, d\}$   hits $0$ in $s(\vv) \xweaksteps{\rho} t(\ww)$ iff it hits $0$ in $s(\vv - \mm) \xsteps{\rho} t(\ww)$. Moreover, if $j$ doesn't hit $0$ in  $s(\vv) \xweaksteps{\rho} t(\ww)$ then $\mm[j] = 0$.  
Formally, by \cref{prop:weak-integer,prop:weak-class,rem:zvass}, $s(\vv) \xweaksteps{\rho} t(\ww)$ iff $s(\vv) \xzsteps{\rho} t(\ww+\mm)$ iff $s(\vv-\mm) \xzsteps{\rho} t(\ww)$ iff $s(\vv-\mm) \xsteps{\rho} t(\ww)$. Notice that if a coordinate $j$ doesn't hit $0$ in $s(\vv) \xweaksteps{\rho} t(\ww)$ then it doesn't in $s(\vv) \xzsteps{\rho} t(\ww+\mm)$ (the counters that don't hit 0 have identical behaviour in all the semantics discussed, by definition). This implies $j$ doesn't hit $0$ in $s(\vv) \xzsteps{\rho} t(\ww+\mm)$. Hence, it doesn't in $s(\vv-\mm) \xzsteps{\rho} t(\ww)$ (as $-\mm \in \N^d$) and therefore, it deosn't in $s(\vv-\mm) \xweaksteps{\rho} s(\ww)$ either (as the runs are identical by \cref{rem:vass-approxim}).  Moreover, by definition of $\mi$, $\mm[j] = 0$, if $j$ doesn't hit $0$ in $s(\vv) \xweaksteps{\rho} t(\ww + \mm)$. Hence, For all the dimensions $j$ that don't hit $0$ in $s(\vv-\mm) \xsteps{\rho} t(\ww)$ (i.e., $j \notin Z$), we have $(\vv-\mm)[j] = \vv[j]$. For all the dimensions $j$ that hit $0$ (i.e., $j \in Z$) we have $(\vv-\mm)[j] \ge  \vv[j]$ (as $-\mm \in \N^d$).
With the choice of $\vv' = \vv - \mm$, we conclude the proof of this direction.

For the $(\impliedby)$-direction, let $\vv'$ be any vector such that $\vv' \ge \vv$ and $s(\vv') \xsteps{\rho} t(\ww)$. Then, if any coordinate $i \in \{1 \ldots d\}$ doesn't hit $0$ in $s(\vv') \xsteps{\rho} t(\ww)$, we have $\vv'[i] = \vv[i]$. Let $s(\vv) \xweaksteps{\rho} t(\ww'')$. It suffices to show that $\ww = \ww''$.

Let $s(\vv') \xsteps{\rho} t(\ww)$ and $s(\vv) \xweaksteps{\rho} t(\ww'')$ be runs of the form $p_0( \vv_0) \ldots p_k( \vv_k)$ and \linebreak $p_0( \vv'_0) \ldots p_k( \vv'_k)$, respectively, where $p_0 = s$, $p_k = t$, $\vv_0 = \vv$, $\vv'_0 = \vv'$, $\vv_k = \ww''$ and $\vv'_k = \ww$. Since the same sequence of transitions is applied in both of these runs, $\vv'\ge \vv$ implies $\vv_1 \le \vv'_1$, $\vv_2 \le \vv'_2$, \ldots, $\vv_k \le \vv'_k$ (*). Moreover, if  a coordinate $i$ never hits $0$ in the former (classical run) then both the runs agree on the value of the counter throughout, as $\vv[j] = \vv'[i]$. Otherwise if $\vv_j[i] = 0$ for for some $1 \le j \le k$ then $\vv_j'[i] = 0 = \vv[i]$ by (*). Hence, from this point onwards, both the runs agree on the value of $i$. Therefore both the runs end up in the same configuration. 

Formally, for any coordinate $i \in \{1 \ldots d\}$ we distinguish two cases. (Case 1) $\vv_j[i] > 0$ for every $0 \le j \le k$ and $\vv' [i] = \vv [i]$. Then for every $0 \le j \le k$, $\vv'_j[i] = \vv_j[i]$ (as we apply an identical sequence of transitions). Hence, $\vv'_k[i] = \vv_k[i]$. (Case 2) There exists $0 \le j \le k$ such that $\vv_j[i] = 0$ and $\vv[i] \le \vv'[i]$. Without loss of generality, we assume that $j$ is the smallest such number. Notice that, until the $(j-1)$th step we have $\vv_0[i] \le \vv'_0[i]$, $\vv_1[i] \le \vv'_1[i]$, $\vv_2[i] \le \vv'_2[i], \ldots \vv_{j-1}[i] \le \vv'_{j-1}[i]$ ($\dagger$). Let $\rho[j] = (p_{j-1}, \zz_{j-1}, p_j)$.
As $\vv_j[i] = 0$, $\zz_{j-1}[i] = - \vv_{j-1}[i]$ ($\ddagger$). By definition of monus semantics $\vv'_{j}[i] = \max (0, \vv'_{j-1}[i] - \zz_{j-1}[i])$. By ($\dagger$) and ($\ddagger$) we have $\vv'_{j-1}[i] - \zz_{j-1}[i] \le 0$. Hence, $\vv'_{j}[i] = \vv_j[i] = 0$. Similar to Case 1, as the sequence of transitions in both the runs are the same, the value of counter $i$ in both the runs synchronizes from step $j$ onwards.
\end{proof}

\section{Formal Proofs for Two-dimensional VASS}

\subsection{Reachability for Two-dimensional VASS}

\subsubsection{Proof of \cref{weak-reach-2-VASS-upper}}\label{proof-reachability-2-VASS-upper}

\textbf{Statement}- In succinct 2-VASS, reachability with monus semantics is in $\PSPACE$.
\begin{proof}
  We reduce to the constrained runs problem for succinct $2$-VASS.
  Let $\cV = (2,Q,\Delta)$ be a $2$-VASS with configurations $p(\vv)$ and $q(\ww)$.
  By \cref{charact-reachability} existence of a monus run $p(\vv) \xweaksteps{\rho} q(\ww)$
  for some path $\rho$ is equivalent to the following:
  There is a subset $Z=\{z_1,\ldots,z_k\}\subseteq \{1,2\}$ and there are configurations
  $p_1(\vv_1),p_2(\vv_2)$, $p(\vv')$ so that
  \begin{enumerate}
    \item $p(\vv')\steps p_1(\vv_1)\steps p_2(\vv_2)\steps q(\ww)$ on the classical run induced by $\rho$,
    \item for each $j\in[1,k]$, we have $\vv_j[z_j]=0$,
    \item $\vv'\ge\vv$
    \item for each $i\in[1,d]\setminus Z$, we have $\vv'[i]=\vv[i]$.
  \end{enumerate}
  For each choice of $p_1,p_2 \in Q$ here, we translate these conditions into an instance of
  the constrained runs problem, which we can then solve in $\PSPACE$ by \cref{constrained-runs-PSPACE}.
  Since enumerating all possibilities for $p_1,p_2$ is also possible in $\PSPACE$, this results in
  a $\PSPACE$ algorithm as required.
  We simply accept if one of the possibilities results in a positive instance of the constrained runs problem, and reject otherwise.
  
  For the translation to a constrained runs problem instance,
  we first construct a new $2$-VASS $\cV'$ from $\cV$ by making all states final,
  adding a new initial state $q_0$,
  and adding the transitions $(q_0,(1,0),q_0)$, $(q_0,(0,1),q_0)$, and $(q_0,\vv,p)$.
  This ensures that condition (1) becomes
  $q_0(0,0) \steps p(x_1,y_1)\steps p_1(x_2,y_2)\steps p_2(x_3,y_3)\steps q(x_4,y_4)$ in $\cV'$,
  as required by the constrained runs problem.
  The added loops on $q_0$ furthermore ensure that we can reach $p(\vv')$ for any $\vv' \geq \vv$.
  Secondly, we fix the numbers $m = 4 = s = t$ and the sequence of states
  $q_0, q_1 = p, q_2 = p_1, q_3 = p_2, q_4 = q$.
  Thirdly, we use the Presburger formula $\ww[1] = x_4 \wedge \ww[2] = y_4$ to express
  that the run ends in the desired configuration $q(\ww)$.
  Finally, we need to express conditions (2) to (4) using equivalent quantifier-free Presburger formulas.
  We will use the conjunction of all the constructed formulas as input formula $\psi$ for the constrained runs problem.
  Condition (3) simply becomes $\vv[1] \leq x_1 \wedge \vv[2] \leq y_1$, where $a \leq b$ is syntactic sugar for $a<b \vee a=b$.
  For conditions (2) and (4) we can go through all five possibilities for the set $Z$
  and construct a big disjunction:
  \begin{align*}
    (\vv[1] = x_1 \wedge \vv[2] = y_1) &\vee (0 = x_2 \wedge \vv[2] = y_1) \vee (\vv[1] = x_1 \wedge 0 = y_2)\\
    &\vee (0 = x_2 \wedge 0 = y_3) \vee (0 = x_3 \wedge 0 = y_2)
  \end{align*}
  Note that the order of elements in $Z$ matters,
  as $z_1 = 1, z_2 = 2$ and $z_1 = 2, z_2 = 1$ result in two different conditions here.
  
  Clearly the conditions (1) to (4) are equivalent to existence of a constrained run in $\cV'$ subject to the conjunction of the constructed formulas.
\end{proof}
In the above proof the constructed algorithm enumerates all possible choices of states $p_1, p_2$,
and moreover goes through all possibilities for the set $Z$ and a total order over its elements.
We remark that since $\NPSPACE = \PSPACE$, one could alternatively have the algorithm make
nondeterministic guesses in both these cases.
The resulting nondeterministic algorithm would have been sufficient to show $\PSPACE$-membership.

\section{Formal proofs for One-dimensional VASS}

\subsection{Reachability for One-dimensional VASS}

\subsubsection{Proof of \cref{monus-reach-1-VASS}}
\label{app:monus-reach-1-VASS}

\textbf{Statement}-
Monus reachability in $1$-VASS is in $\NL$ under unary encoding and in $\NP$ under binary encoding.

\begin{proof}
  We reduce this problem to classical configuration reachability in one-counter
  automata, which is in $\NL$ for unary counter updates and in $\NP$ for binary
  counter updates~\cite{DBLP:conf/concur/HaaseKOW09}.
  
  Consider an instance of the monus reachability problem in $1$-VASS:
  Let $\cV$ be a $1$-VASS with configurations $s(m)$ and $t(n)$.
  In the following we construct a one-counter automaton $\cA$ with two configurations
  in such a way that reachability will be equivalent to the characterization given by \cref{charact-reach-1-VASS}.
  To this end we start with three copies of $\cV$, which we call $\cV_0, \cV_1, \cV_2$.
  The first copy, $\cV_0$, is supposed to check classical reachability of $t(n)$ from $s(m)$ in $\cV$,
  whereas the other two copies handle other case of \cref{charact-reach-1-VASS}.
  We add a new state $s'$ and a gadget that from $s'$ either jumps to $s$ in $\cV_0$,
  or first increments the counter arbitrarily (via a loop) and then jumps to $s$ in $\cV_1$.
  Every state in $\cV_1$ is also connected to the same state in $\cV_2$ with a zero test.
  Then finally we add a new state $t'$ and ensure that starting from $t$ in either $\cV_0$ or $\cV_2$,
  one can jump to $t'$ without changing the counter.
  This completes the construction fo $\cA$; as its two configurations we choose $s'(m)$ and $t'(n)$.
  Note that the size of $\cA$ is linear in the size of $\cV$.
  
  Let us now argue why classical reachability from $s(m)$ to $t(n)$ in $\cV$.
  is equivalent to monus reachability from $s'(m)$ to $t'(n)$ in $\cA$.
  By \cref{charact-reach-1-VASS} the former is equivalent to (a) classical reachability in $\cV$
  or (b) existence of a state $q$ such that $t(n)$ is classically reachable form $q(0)$
  and $s(m)$ is classically coverable from $q(0)$ in the reverse VASS.
  The second part of case (b) can be equivalently restated as existence of a counter value $\ell \geq m$
  such that $q(0)$ is classically reachable from $s(\ell)$ in $\cV$.
  Both parts of case (b) together thus are equivalent to existence of a run $\rho$
  from $s(\ell)$ to $t(n)$ under classical semantics such that
  $\ell > m$ and $\rho$ reaches counter value $0$ in some state $q$.
  It is not difficult to see that reachability from $s'(m)$ to $t'(n)$ in $\cA$ by going through
  $\cV_1$ and $\cV_2$ is equivalent to case (b), whereas going through $\cV_0$ is equivalent to case (a).
  Since $t'$ can only be reached from $s'$ in $\cA$ by going through these copies of $\cV$,
  we thus have proven the reduction correct.
\end{proof}

\subsubsection{Proof of \cref{reachability-1-vass-np-hard}}\label{proof-reachability-1-vass-np-hard}

\textbf{Statement}-
Monus reachability in $1$-VASS is $\NP$-hard under binary encoding.

\begin{proof}
  Given $a_1,\ldots,a_n,a$ in binary, we construct the $1$-VASS in
  \cref{one-dim-np-hard}. It is clear from the construction that in this
  $1$-VASS, we can monus reach $q_f(1)$ from $q_0(0)$ if and only if the
  subset-sum instance is positive:
  If we reach $q_f(1)$ from $q_0(0)$ under classical semantics, then it is
  clear that our sum equates to exactly $a$.
  On the other hand, the only possible subtraction in this $1$-VASS is the transition
  $q_{n+1} \xrightarrow{-a} q_f$, which, if performed in monus semantics, would lead to $q_f(0)$ instead of $q_f(1)$.
  Therefore all runs reaching the latter configuration are also valid under classical semantics.
  This completes the proof.
\end{proof}

\subsection{Coverability for One-dimensional VASS}

\subsubsection{Proof of \cref{charact-cover-1-VASS}}\label{proof-charact-cover-1-vass}

\textbf{Statement}-
Let $\cV$ be a $1$-VASS with configurations $s(m)$ and $t(n)$.
Then $t(n)$ is monus coverable from $s(m)$ in $\cV$ if and only if
$t(n)$ is coverable from $s(m)$ in $\cV$ under classical semantics
or there is a state $q$ of $\cV$ such that
$t(n)$ is coverable from $q(0)$ in $\cV$ under classical semantics
and $s(m)$ is coverable from $q(0)$ in $\rev{\cV}$ under classical semantics.

\begin{proof}
  For the only if direction, assume $t(n)$ is monus coverable from $s(m)$ in $\cV$.
  If the witnessing run $\rho$ does not reach counter value $0$ anywhere in-between,
  then clearly we also have coverability under classical semantics.
  On the other hand, let $q(0)$ be the last configuration before $t(n)$ in $\rho$ counter value $0$.
  Then by the same argument as before,
  $t(n)$ is monus coverable from $q(0)$ in $\cV$ under classical semantics.
  Moreover, $q(0)$ is monus reachable from $s(m)$ in $\cV$, which by \cref{charact-zero-reachability}
  implies that $s(m)$ is coverable from $q(0)$ in $\rev{\cV}$ under classical semantics.
  
  For the if direction, note that coverability of $t(n)$ from $s(m)$ under classical semantics
  obviously implies the same under monus semantics.
  Therefore let us assume that there is a state $q$ of $\cV$ such that
  $t(n)$ is coverable from $q(0)$ in $\cV$ under classical semantics
  and $s(m)$ is coverable from $q(0)$ in $\rev{\cV}$ under classical semantics.
  Applying \cref{charact-zero-reachability} to the latter yields monus reachability
  of $q(0)$ from $s(m)$ in $\cV$.
  If we take the witnessing run for this, and append to it the run that covers $t(n)$ from $q(0)$,
  we obtain a run that monus covers $t(n)$ from $s(m)$ as required.
\end{proof}

\end{document}